
\documentclass[11pt, reqno, oneside]{article}
\usepackage{geometry} 
\geometry{a4paper} 
\usepackage[affil-it]{authblk}

\usepackage{amsmath, amsthm, amsfonts, amssymb, amsbsy, bigstrut, graphicx, enumerate,  upref, longtable, comment, booktabs, array, caption, subcaption}

\captionsetup[subfloat]{labelfont=normalfont}







\usepackage[breaklinks]{hyperref}

\usepackage[numbers, sort&compress]{natbib}










\newcommand{\me}{\mathcal{E}}

\newtheorem{thm}{Theorem}
\newtheorem{lmm}{Lemma}
\newtheorem{cor}{Corollary}

\theoremstyle{definition}

\newcommand{\cov}{\mathrm{Cov}}

\newcommand{\ee}{\mathbb{E}}

\newcommand{\mf}{\mathcal{F}}

\newcommand{\cp}{\mathcal{P}}

\newcommand{\pp}{\mathbb{P}}

\newcommand{\rr}{\mathbb{R}}
\newcommand{\smallavg}[1]{\langle #1 \rangle}
\newcommand{\avg}[1]{\langle #1 \rangle}

\newcommand{\var}{\mathrm{Var}}

\newcommand{\zz}{\mathbb{Z}}

\newcommand{\fpar}[2]{\frac{\partial #1}{\partial #2}}

\newcommand{\mpar}[3]{\frac{\partial^2 #1}{\partial #2 \partial #3}}






\usepackage[utf8]{inputenc}
\usepackage{amsmath}
\usepackage{amsfonts}
\usepackage{bbm}
\usepackage{mathtools}
\usepackage{tikz}
\usetikzlibrary{decorations.pathreplacing}
\usepackage{amsthm}
\usepackage[english]{babel}
\usepackage{fancyhdr}
\usepackage{amssymb}
\usepackage{enumitem}
\usepackage{qtree}
\usepackage{mathrsfs}

\newcommand{\N}{\mathbb{N}}
\newcommand{\Z}{\mathbb{Z}}

\newcommand{\R}{\mathbb{R}}

\newcommand{\E}{\mathbb{E}}

\renewcommand{\P}{\mathbb{P}}

\renewcommand{\tilde}{\widetilde}
\renewcommand{\hat}{\widehat}




\setcounter{tocdepth}{2}

\begin{document}
\title{Spin glass phase at zero temperature in the Edwards--Anderson model}
\author{Sourav Chatterjee\thanks{Department of Statistics, Stanford University, 390 Jane Stanford Way, Stanford, CA 94305, USA. Email: \href{mailto:souravc@stanford.edu}{\tt souravc@stanford.edu}. 
}}
\affil{Stanford University}

\maketitle


\begin{abstract}
Mean field spin glass models have undergone substantial mathematical development, but finite dimensional short range spin glasses remain much less understood. This paper proves several rigorous zero temperature signatures of glassy behavior for the Edwards--Anderson model with Gaussian couplings, in finite boxes in arbitrary dimension. First, the ground state is sensitive to small perturbations of the disorder: after a perturbation of size $p$, the new ground state is nearly orthogonal to the original one in site overlap once $p$ is sufficiently larger than the inverse system size. Second, the droplets generated by such perturbations have large interfaces; in the macroscopic-droplet regime, their boundaries satisfy lower bounds consistent with a fractal dimension strictly greater than $d-1$. Third, there exist macroscopic spin excitations whose energy cost is negligible compared with the size of their interface, in sharp contrast with ferromagnetic behavior. Fourth, the expected size of the critical droplet associated with a typical bond grows at least as a power of the volume. Finally, a natural boundary condition sensitivity for nearest-neighbor spin products cannot decay faster than order the inverse distance to the boundary, contrasting with recent exponential decay results for the two-dimensional random field Ising model. Taken together, these results provide rigorous evidence --- and, in the senses made precise below, proof --- of zero temperature glassy behavior in a short range spin glass model.
\newline
\newline
\noindent {\scriptsize {\it Key words and phrases.} Edwards--Anderson model, disorder chaos, spin glass.}
\newline
\noindent {\scriptsize {\it 2020 Mathematics Subject Classification.} 82B44, 82D30.}
\end{abstract}


\section{Introduction}
Spin glasses are disordered magnetic systems whose low temperature behavior is characterized by frustration, many competing nearly optimal configurations, and sensitivity to small changes in the disorder. Experimental examples include dilute alloys such as AuFe and CuMn. Their mathematical analysis is challenging, especially for finite dimensional models with short range interactions.

Spin glass models fall broadly into two classes: mean field models, such as the Sherrington--Kirkpatrick (SK) model, and finite dimensional lattice models, such as the Edwards--Anderson (EA) model. The EA model was introduced in \cite{edwardsanderson75} as a finite dimensional model of a spin glass with short range interactions. In contrast to the SK model \cite{sherringtonkirkpatrick75}, whose mathematical analysis has seen major progress \cite{talagrand10, talagrand11, panchenko13}, the EA model remains much less understood. A central question is whether the EA model exhibits, at low enough temperature, the phenomena usually associated with a spin-glass phase. This question is not merely one of mathematical rigor: even in the physics literature, the low temperature structure of the EA model remains the subject of competing predictions. For more on this longstanding debate, see \cite{braymoore85, fisherhuse88, krzakalamartin00, mcmillan84, mezardetal87, newmanstein03, palassiniyoung00, contuccietal06, marinariparisi01, contuccietal07, contuccigiardina13, read14} and the references therein.

This paper proves several rigorous zero temperature results for the EA model with Gaussian disorder. These include disorder chaos for the site overlap, lower bounds on the size of droplet interfaces, the existence of low energy macroscopic excitations, polynomial lower bounds on critical droplet sizes, and slow decay of boundary condition dependence for nearest neighbor spin products. Together, these results give rigorous evidence for zero temperature glassy behavior in a short range spin glass model, and they sharply distinguish the EA model from ferromagnetic systems.

We emphasize that the results in this paper are finite volume, zero temperature statements. They do not settle the infinite volume question of whether the Edwards--Anderson model has incongruent ground states, nor do they distinguish conclusively between all competing pictures of the low temperature phase. Rather, they establish several rigorous signatures that have long been associated with spin glass behavior in the physics literature: disorder chaos, large and rough droplet interfaces, low energy macroscopic excitations, growth of critical droplets, and slow decay of boundary condition sensitivity. In this precise sense, the results give a rigorous proof of zero temperature glassy behavior for a short range spin glass model.

We begin by defining the model and then state the results, grouped by theme. The final subsection of the introduction discusses related literature and open problems. Proofs are given in Section~\ref{proofsec}.

\subsection{The model}
Let $G$ be a finite, simple, connected graph with vertex set $V$ and edge set $E$. Let $J = (J_e)_{e\in E}$ be a collection of i.i.d.~random variables with a given law $\mu$. The Edwards--Anderson Hamiltonian on this graph in the environment (or disorder, or bond strengths, or edge weights) $J$ is the random function $H_J :\{-1,1\}^V \to \R$ defined as
\[
H_J(\sigma) := -\sum_{\{i,j\}\in E} J_{ij} \sigma_i \sigma_j.
\]
A ground state for this model is a state $\sigma$ (depending on $J$) that minimizes the above Hamiltonian. If $\mu$ has no atoms, then it is not hard to show that with probability one, there are exactly two ground states $\sigma$ and $-\sigma$. 

What we have described above is the ground state under the free boundary condition. Sometimes we impose a boundary condition, in  the following way. Let $B$ be a nonempty subset of $V$ and $\gamma$ be a fixed element of $\{-1,1\}^B$. Then the ground state under boundary condition $\gamma$ on the boundary $B$ is the minimizer of $H_J(\sigma)$ under the constraint that $\sigma_i = \gamma_i$ for each $i\in B$. Again, it is not hard to show that under a boundary condition, there is a unique ground state with probability one if $\mu$ has no atoms, provided that $V\setminus B$ is a connected subset of $V$. {\it We will henceforth assume that $V\setminus B$ is connected.}

For concreteness, one may keep in mind the nearest-neighbor box
\[
G_L=\{0,1,\ldots,L\}^d\subseteq \Z^d .
\]
In the absence of a boundary condition, this is the finite-volume EA model with free boundary condition. The usual boundary $B_L$ is the set of vertices with at least one coordinate equal to $0$ or $L$. Another standard choice is periodic boundary condition, obtained by identifying vertices on opposite faces of the box.

The EA model at inverse temperature $\beta$ assigns a probability measure with mass proportional to $e^{-\beta H(\sigma)}$ at each $\sigma$. The $\beta =\infty$ (zero temperature) model is just the probability measure that puts all its mass on the ground state (or the uniform distribution on the pair of ground states in the free boundary case).  In this paper, we will only consider the zero temperature model. Also, throughout, {\it we will take the disorder distribution $\mu$ to be the standard Gaussian distribution,} although various parts of the proofs should work for quite general distributions. 

One source of difficulty with understanding properties of the ground state is computational. At zero temperature, finding a ground state is equivalent to solving a weighted cut optimization problem, or equivalently, after changing the signs of the weights, a weighted maximum cut problem. The maximum cut problem is NP-hard for general graphs \cite{gareyjohnson79}, although polynomial time algorithms are available for planar graphs \cite{shihetal90}. Thus, even the exact finite volume optimization problem underlying the model is computationally hard in general.

\subsection{Chaotic nature of the ground state} 
Our first main result is that the ground state of the EA model with standard Gaussian disorder is sensitive to small changes in the disorder $J$, a phenomenon that is sometimes called ``disorder chaos''. We consider two kinds of perturbations, both determined by a parameter $p\in (0,1)$. In the first kind of perturbation, we replace each $J_e$ by $(1-p)J_e + \sqrt{2p-p^2} J_e'$, where $J' = (J_e')_{e\in E}$ is another set of i.i.d.~standard Gaussian random variables, independent of $J$. The coefficients in front of $J_e$ and $J_e'$ are chosen to ensure that the linear combination is again a standard Gaussian random variable. In the second kind of perturbation, each $J_e$ is replaced by $J_e'$ with probability $p$, independently of each other. 

Let $V^\circ := V\setminus B$ denote the set of ``interior vertices'' of $V$. Note that $V^\circ = V$ when $B=\emptyset$ (the case of free boundary). We have already assumed earlier that $V^\circ$ is connected. To avoid trivialities, {\it we will assume that $V^\circ$ is nonempty and $|E|\ge 2$.}  Let $\sigma$ be the ground state in the original environment and $\sigma'$ be the ground state in the perturbed environment. The ``site overlap'' between the two configurations is defined as
\[
R(p) := \frac{1}{|V^\circ|} \sum_{i\in V^\circ} \sigma_i \sigma_i'.
\]
If $B=\emptyset$ (i.e., for the free boundary condition), $R(p)$ is not well-defined since there are two ground states in both environments. But $R(p)^2$ is still well-defined, and  that is sufficient for our purposes. Note that $R(p)$ is close to zero if and only if $\sigma$ and $\sigma'$ are nearly orthogonal to each other --- or in other words, $\sigma$ and $\sigma'$ disagree on approximately half the vertices. The following theorem shows that under certain conditions, $R(p)\approx 0$ with high probability for a tiny value of $p$, which is what's commonly known as disorder chaos for the site overlap. We first state the result  for a general graph $G$, and then specialize to the case of a cube in $\Z^d$ in the corollary that follows. The proof is in Section \ref{mainthmpf}.
\begin{thm}\label{mainthm}
Let all notations be as above. Let $d$ denote the graph distance on $G$. Suppose that there are positive constants $\alpha$, $\beta$, $\gamma$ and $\delta$ such that for any $i\in V^\circ$ and $r\ge 1$, the number of $j$ such that $d(i,j)\le r$ is at most $\alpha r^\beta$, and the number of $j$ such that $\min\{d(j,k):k\in B\}\le r$ is at most $\gamma |B| r^\delta$. Then for both kinds of perturbations, we have that for any $p\in (0,1)$, 
\begin{align*}
\ee(R(p)^2) &\le  \frac{C (|V^\circ| p^{-\beta} + |B|^2 p^{-2\delta})}{|V^\circ|^2}, 
\end{align*}
where $C$ is a constant depending only on $\alpha$, $\beta$, $\gamma$ and $\delta$.
\end{thm}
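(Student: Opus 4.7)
The plan is to expand
\begin{align*}
\ee(R(p)^2) = \frac{1}{|V^\circ|^2} \sum_{i, j \in V^\circ} \ee\bigl(F_{ij}(J)\,F_{ij}(J')\bigr), \qquad F_{ij}(J) := \sigma_i(J)\sigma_j(J),
\end{align*}
and notice that $F_{ij}$ is well-defined even under the free boundary condition, since it is invariant under the global sign flip of $\sigma$. The diagonal $i=j$ already accounts for the $|V^\circ|^{-1}$ term of the bound, so the task is to control the off-diagonal pair correlations.

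The central ingredient I would aim for is a \emph{pair correlation bound}: a rapidly decaying function $g$ such that $\ee(F_{ij}(J)F_{ij}(J')) \le C\,g(p\cdot d(i,j))$ whenever $i,j \in V^\circ$ are sufficiently far from $B$. For the Gaussian perturbation, this would reduce via the Ornstein--Uhlenbeck interpolation and the Hermite expansion to controlling the low-degree spectral mass of $F_{ij}$: one writes $\ee(F_{ij}(J)F_{ij}(J')) = \sum_k (1-p)^k W_{ij}^{(k)}$ with $\sum_k W_{ij}^{(k)} = 1$ (since $F_{ij}^2 \equiv 1$). A useful first cancellation is that the Hamiltonian symmetry $H_{-J}(-\sigma) = H_J(\sigma)$ forces $F_{ij}(-J) = F_{ij}(J)$, so only even Hermite degrees contribute. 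For the resampling perturbation, the Efron--Stein decomposition yields the analogue $\sum_S (1-p)^{|S|}\|F_{ij,S}\|^2$, and again only the low-$|S|$ mass matters. The geometric handle in either case is the path identity $F_{ij} = \prod_{e\in P}\eta_e$ with $\eta_e := \sigma_a\sigma_b$, valid for any edge-path $P \subset V^\circ$ from $i$ to $j$, which strongly suggests that $F_{ij}$ has effective degree of order $d(i,j)$ and hence negligible low-degree mass when $p\cdot d(i,j)$ is large.

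Granted the pair correlation bound, the remaining task is a counting argument. I would fix $r_0 \asymp 1/p$ and split pairs into three groups. Pairs with $d(i,j)\le r_0$ number at most $\alpha |V^\circ| r_0^\beta \asymp |V^\circ| p^{-\beta}$ by the first growth hypothesis; pairs with both endpoints within $r_0$ of $B$ number at most $(\gamma|B| r_0^\delta)^2 \asymp |B|^2 p^{-2\delta}$ by the second. Both are handled with the trivial bound $|F_{ij}F_{ij}'|\le 1$ and, after dividing by $|V^\circ|^2$, produce exactly the second and third terms of the theorem. The remaining bulk pairs --- with $d(i,j)>r_0$ and both endpoints far from $B$ --- are controlled using the decay of $g$; provided $g$ decays faster than $x^{-\beta}$, their total contribution is absorbed into the previous terms.

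The principal obstacle is the pair correlation bound itself. No naive Lipschitz or stability argument can work, because that would contradict the very chaos one is trying to quantify; the decay must arise from cancellations intrinsic to the global minimization defining $\sigma$. I would expect this step to require a careful switching/resampling argument that relates the effect of perturbing a single edge on $F_{ij}$ to the existence of a critical droplet whose boundary separates $i$ from $j$ in the ground state, combined with a geometric accounting of how many edges can simultaneously host droplets of a given diameter --- which is exactly where the graph-growth hypotheses of the theorem should feed back into the analysis.
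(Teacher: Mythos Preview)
Your overall architecture is right and matches the paper: expand $\ee(R(p)^2)$ as a double sum, note the diagonal gives $1/|V^\circ|$, control each off-diagonal term by a spectral (Hermite/Efron--Stein) argument showing that $F_{ij}$ has no mass below a degree comparable to $\min\{d(i,j),\,d(i,B)+d(j,B)\}$, and finish with a counting argument using the growth hypotheses. The counting step you sketch is essentially equivalent to the paper's dyadic summation.

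The genuine gap is the step you yourself flag as ``the principal obstacle'': you do not have a proof of the pair correlation bound. The path identity $F_{ij}=\prod_{e\in P}\eta_e$ is suggestive but does not by itself bound the low-degree spectral mass, because each factor $\eta_e=\sigma_a\sigma_b$ depends on the \emph{entire} environment $J$, not on a single coordinate; a product of $d(i,j)$ such factors need not have Hermite degree $\ge d(i,j)$. Likewise the global parity $F_{ij}(-J)=F_{ij}(J)$ kills only odd degrees, which is far too weak. The critical-droplet/switching route you propose at the end is much harder than necessary and it is not clear it would close.

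What you are missing is a short symmetry argument that directly forces $\widehat{F_{ij}}(n)=0$ whenever the ``support'' $E(n)=\{e:n_e>0\}$ fails to connect $i$ to $j\cup B$. Concretely: if $A$ is the connected component of $i$ in the graph $(V,E(n))$ and $A$ avoids both $j$ and $B$, define $J'$ by negating $J_e$ on the edge boundary $\partial A$ and leaving it unchanged elsewhere. Then $J'\stackrel{d}{=}J$, the basis function $h_n(J)$ is unchanged (no edge of $\partial A$ lies in $E(n)$), and the ground state in environment $J'$ is obtained from $\sigma$ by flipping spins on $A$; since $i\in A$ and $j\notin A$ this gives $F_{ij}(J')=-F_{ij}(J)$, whence $\widehat{F_{ij}}(n)=0$. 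Running the same argument with $i$ and $j$ swapped yields that any nonvanishing $n$ must have $|E(n)|\ge \min\{d(i,j),\,d(i,B)+d(j,B)\}$, and the pair bound
\[
\ee\bigl(F_{ij}(J)F_{ij}(J')\bigr)\le (1-p)^{\min\{d(i,j),\,d(i,B)+d(j,B)\}}
\]
follows immediately for both perturbations. This is the one missing idea; once you have it, your counting finishes the proof.
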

Let us now check what this yields for $V = \{0,1,\ldots,L\}^d$ with the usual boundary, for some dimension $d\ge 1$ (not to be confused with the graph distance $d$). In this case, $|V^\circ|$ is of order $L^d$, $|B|$ is of order $L^{d-1}$, $\beta = d$, and $\delta = 1$. Thus, we get the following corollary.
\begin{cor}\label{chaoscor}
If $V = \{0,1,\ldots,L\}^d$ with the usual boundary and with any given disorder-independent boundary condition, then for both kinds of perturbations, we have that for all $p\in (0, 1)$, 
\[
\ee(R(p)^2) \le 
\begin{cases}
C(d)L^{-1} p^{-1} &\text{ if } d=1,\\
C(d)L^{-2}p^{-2} &\text{ if } d\ge 2,
\end{cases}
\]
where $C(d)$ depends only on $d$. For  free or periodic boundary, the bound is
\[
\ee(R(p)^2) \le \frac{C(d)}{L^d p^d} \ \ \text{ for all } d\ge 1. 
\]
\end{cor}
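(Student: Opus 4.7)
The plan is to verify the hypotheses of Theorem~\ref{mainthm} for the cube $\{0,1,\ldots,L\}^d$ and then simplify the resulting upper bound.

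First I would identify the geometric exponents $\beta$ and $\delta$. A ball of graph-distance radius $r$ in $\Z^d$ contains at most $(2r+1)^d\le (3r)^d$ vertices (for $r\ge 1$), and the same bound carries over to the torus since it is locally isometric to $\Z^d$, while a crude estimate $|V|\le (L+1)^d\le (3r)^d$ handles the large-$r$ regime once $r\ge L$. This gives $\beta=d$. For the usual boundary $B$, the number of vertices within graph distance $r$ of $B$ is at most $CL^{d-1}\max(r,1)$: a direct coordinate-wise count (a point is within $\ell^1$-distance $r$ of $B$ iff one coordinate lies within $r$ of $\{0,L\}$) gives $CL^{d-1}r$ when $r\le L$, and when $r\ge L$ the trivial bound $|V|\le CL^d\le CL^{d-1}r$ suffices. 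Since $|B|$ is of order $L^{d-1}$, this gives $\delta=1$.

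Next I would substitute these exponents into Theorem~\ref{mainthm}. With $|V^\circ|$ of order $L^d$ and $|B|$ of order $L^{d-1}$, the theorem yields
\[\ee(R(p)^2)\le \frac{C}{L^d}+\frac{C}{L^d p^d}+\frac{C}{L^2 p^2}\]
under the usual boundary, while for free or periodic boundary one has $B=\emptyset$ and the $|B|^2$ contribution vanishes, leaving
\[\ee(R(p)^2)\le \frac{C}{L^d}+\frac{C}{L^d p^d}.\]
Since $p\le 1$, the first term in each display is dominated by the second, which immediately yields the free/periodic assertion of Corollary~\ref{chaoscor}.

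For the usual boundary I would finish by splitting on $d$. If $d=1$, the target $C/(Lp)$ already matches the $1/(L^d p^d)$ term, so I only need to absorb $1/(L^2p^2)$; this follows from a dichotomy on $Lp$: when $Lp\ge 1$ we have $1/(L^2p^2)\le 1/(Lp)$, and when $Lp<1$ the target $1/(Lp)$ exceeds $1$, so the trivial bound $R^2\le 1$ closes the gap. For $d\ge 2$, the target $C/(L^2 p^2)$ matches the $1/(L^2p^2)$ term, and the same dichotomy absorbs the $1/(L^d p^d)$ contribution: if $Lp\ge 1$ then $1/(L^d p^d)\le 1/(L^2p^2)$ since $d\ge 2$, and otherwise the trivial bound $R^2\le 1$ applies. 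No step is genuinely difficult; the only subtleties are the uniform-in-$r$ verification for $\delta=1$ (handled by the $r\lessgtr L$ split) and the elementary $Lp\lessgtr 1$ dichotomy used to consolidate the three-term bound into the form stated in the corollary.
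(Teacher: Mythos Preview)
Your proposal is correct and follows the same approach as the paper: the paper simply records that $|V^\circ|\asymp L^d$, $|B|\asymp L^{d-1}$, $\beta=d$, $\delta=1$ and states that the corollary follows from Theorem~\ref{mainthm}. Your write-up makes explicit the elementary simplification (in particular the $Lp\gtrless 1$ dichotomy using the trivial bound $R(p)^2\le 1$) that the paper leaves to the reader.
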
 
Thus, $R(p)$ is small with high probability whenever $p\gg L^{-1}$. Equivalently, under such perturbations the original and perturbed ground states disagree on approximately half of the sites. This proves disorder chaos for the site overlap at zero temperature, as predicted in \cite{braymoore87}. The threshold $L^{-1}$ may not be optimal; simulations in \cite{braymoore87} suggest that smaller perturbations may already produce chaos.

The proof of Theorem~\ref{mainthm} also gives a more local form of disorder chaos for relative spin orientations. This is close to the prediction in \cite{braymoore87} that, in a glassy phase, relative orientations of widely separated spins should be sensitive to small changes in the bond strengths.

\begin{thm}\label{chaosthm}
In the setting of Theorem \ref{mainthm}, take any $p\in (0,1)$, and let $\sigma$ and $\sigma'$ be the ground states of the unperturbed system and the system with perturbation parameter $p$ (for either kind of perturbation), respectively. Then for any $i,j\in V^\circ$,
\[
|\ee(\sigma_i\sigma_j \sigma_i'\sigma_j')| \le (1-p)^{\min\{d(i,j), d(i,B)+d(j,B)\}},
\]
where $d(i,B) := \min\{d(i,k): k\in B\}$ (defined to be infinity if $B=\emptyset$). 
\end{thm}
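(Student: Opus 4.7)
The plan is to expand $g(J) := \sigma_i(J)\sigma_j(J)$ in the multivariate Hermite basis on the Gaussian space $\R^E$ and use gauge symmetry of the Edwards--Anderson Hamiltonian to show that the only surviving Hermite modes correspond to edge sets carrying $i$ to $j$ (directly or via the boundary). Since $g$ takes values in $\{-1,+1\}$, writing $g = \sum_\alpha c_\alpha \tilde{H}_\alpha$ in the normalized Hermite basis indexed by $\alpha : E \to \Z_{\ge 0}$ gives $\sum_\alpha c_\alpha^2 = \ee[g^2] = 1$.

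The core is a gauge calculation. For $\eta \in \{-1,+1\}^V$ with $\eta|_B \equiv 1$, set $\epsilon_e := \eta_u \eta_v$ for $e = \{u,v\}$. The identity $H_J(\eta\sigma) = H_{\epsilon J}(\sigma)$, combined with the observation that $\eta|_B = 1$ preserves the boundary condition, shows that the constrained ground state satisfies $\sigma(\epsilon J) = \eta\,\sigma(J)$, hence $g(\epsilon J) = \eta_i\eta_j\,g(J)$. Since $\epsilon J \stackrel{d}{=} J$ (each $J_e$ is symmetric) and $\tilde{H}_\alpha(\epsilon J) = \epsilon^\alpha \tilde{H}_\alpha(J)$, where $\epsilon^\alpha := \prod_e \epsilon_e^{\alpha_e}$, a change of variables yields
$$c_\alpha \;=\; \eta_i\eta_j\,\epsilon^\alpha\, c_\alpha.$$
As $\epsilon^\alpha$ depends only on the parities of $\alpha_e$, setting $\bar\alpha := \{e : \alpha_e \text{ is odd}\}$ gives $\epsilon^\alpha = \prod_v \eta_v^{d_{\bar\alpha}(v)}$; letting $\eta_v$ vary freely for $v \in V^\circ$, the identity forces $c_\alpha = 0$ unless $d_{\bar\alpha}(v)$ is odd at exactly $v \in \{i,j\}$ within $V^\circ$, with no constraint on $v \in B$. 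A standard path/cycle decomposition then shows that any such edge set $\bar\alpha$, whose odd-degree vertex set is $\{i,j\} \cup T$ for some even-sized $T \subseteq B$, must contain at least $k := \min\{d(i,j),\, d(i,B)+d(j,B)\}$ edges, so $|\bar\alpha| \ge k$ whenever $c_\alpha \ne 0$.

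The final step evaluates the Hermite two-point function for each perturbation. By edgewise factorization and the univariate identities $\ee[\tilde{H}_m(X)\tilde{H}_n(Y)] = \delta_{mn}\rho^m$ (Mehler, for $(X,Y)$ standard bivariate normal with correlation $\rho = 1-p$: first perturbation) and $\ee[\tilde{H}_m(X)\tilde{H}_n(Y)] = \delta_{mn}(1-p)^{\mathbf{1}\{m\ge 1\}}$ (by conditioning on whether the edge is resampled: second perturbation), one obtains $\ee[\tilde{H}_\alpha(J)\tilde{H}_\beta(J')] = \delta_{\alpha,\beta}(1-p)^{|\alpha|}$ and $\delta_{\alpha,\beta}(1-p)^{|\supp\alpha|}$, respectively. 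Since $|\alpha| \ge |\supp\alpha| \ge |\bar\alpha| \ge k$ whenever $c_\alpha \ne 0$, in both cases
$$\ee[g(J)g(J')] \;\le\; (1-p)^k \sum_\alpha c_\alpha^2 \;=\; (1-p)^k,$$
and the sum is manifestly nonnegative, so the absolute-value bound follows.

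I expect the main obstacle to be the gauge/combinatorial step: verifying the covariance $\sigma(\epsilon J) = \eta\,\sigma(J)$ with care (delicate in the free-boundary case, where $\sigma$ is only defined up to a global sign flip while $g$ is unambiguous), and establishing $|\bar\alpha|\ge k$ via the minimum $T$-join decomposition. The Mehler/resampling computation in the final step is routine.
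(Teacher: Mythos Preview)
Your proposal is correct and follows essentially the same strategy as the paper: expand $\sigma_i\sigma_j$ in the multivariate Hermite basis, use the gauge (sign-flip) symmetry of the Hamiltonian to show that nonzero coefficients only occur for multi-indices whose edge set connects $i$ to $j$ (directly or through $B$), and then apply the Mehler/resampling identities to evaluate $\ee[g(J)g(J')]$.

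The one substantive difference is in how the gauge step is organized. The paper (Lemma~\ref{sizelmm}) flips $J$ on $\partial A$ for $A$ a connected component of the support $E(n)$, obtaining a \emph{connectivity} constraint on $\supp\alpha$; you instead run the full gauge group $\{\eta:\eta|_B\equiv 1\}$, obtain a \emph{parity} constraint on the odd-support $\bar\alpha\subseteq\supp\alpha$, and then use a $T$-join argument to get $|\bar\alpha|\ge k$. Both routes yield $|\supp\alpha|\ge|\bar\alpha|\ge k$, and your direct evaluation of $\ee[g(J)g(J')]=\sum_\alpha c_\alpha^2(1-p)^{|\alpha|}$ (resp.\ $(1-p)^{|\supp\alpha|}$) is slightly cleaner than the paper's detour through Cauchy--Schwarz and the conditional second moment in Lemma~\ref{philmm}.
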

This theorem, proved in Section \ref{chaosthmpf}, shows that if $i$ and $j$ are two vertices such that $d(i,j)$, $d(i,B)$ and $d(j,B)$ are all much greater than $p^{-1}$, then the relative orientations of the spins at $i$ and $j$ in the original and the perturbed environments are approximately independent of each other (since marginally, both $\sigma_i \sigma_j$ and $\sigma_i'\sigma_j'$  are uniformly distributed on $\{-1,1\}$).

Notice the contrast between the EA model and any ferromagnetic model --- even one with random bonds --- in Theorems \ref{mainthm} and \ref{chaosthm}. In a ferromagnetic model, a small perturbation of the environment does not change the ground state at all, whereas in the EA model, a small perturbation causes such a large change that the original and perturbed ground states are almost orthogonal to each other. 

\subsection{Fractal dimension of the droplet boundary}
Consider the EA model on $V= \{0,1,\ldots, L\}^d$. Suppose we apply a perturbation of the first kind, with perturbation parameter $p$. Let $A$ be the region of overturned spins, which is usually called a ``droplet''. Suppose that the perturbation is sufficiently large to ensure that $A$ has macroscopic  size (i.e., of order $L^d$).  What is the size of the edge boundary $\partial A$ of $A$ (i.e., the set of all edges from $A$ to $V\setminus A$)? If $A$  were a regular macroscopic region, one would expect $|\partial A|$ to be of order $L^{d-1}$. The physics literature predicts instead that droplet interfaces in spin glasses are rougher, with an effective dimension $d_s>d-1$ \cite{fisherhuse86, braymoore87}. The following theorem, proved in Section \ref{fractalthmpf}, gives a rigorous lower bound of this type for droplets produced by perturbations of the first kind. We do not impose connectedness of the droplet; in particular, the perturbation droplet may be disconnected.

\begin{thm}\label{fractalthm}
Consider the EA model on $\{0,1,\ldots,L\}^d$ with the usual boundary and some given boundary condition, or with free or periodic boundary. Take any $p\in (0,1)$ and let $A$ be the region of overturned spins when the environment is given a perturbation of the first kind with parameter $p$. Then 
\[
\pp(|\partial A|\ge C_1 (1-p)\sqrt{p}L^d (\log L)^{-\frac{1}{2}}) \ge 1- 3e^{-C_2pL^d} - C_3L^{-C_4},
\]
where $C_1$, $C_2$, $C_3$ and $C_4$ are positive constants depending only on $d$. 
\end{thm}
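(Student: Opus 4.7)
The plan is to reduce the statement to two ingredients: a lower bound on the expectation $\ee|\partial A|$, and a concentration estimate controlling the two error scales $3e^{-C_2pL^d}$ and $C_3L^{-C_4}$ appearing in the final probability.

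The easy ingredient comes from Theorem~\ref{chaosthm}: applied to a nearest-neighbour edge $e=\{i,j\}$ with $d(i,j)=1$, it gives $|\ee[\sigma_i\sigma_j\sigma_i'\sigma_j']|\le 1-p$, and hence $\pp(e\in\partial A)=\tfrac12\ee[1-\sigma_i\sigma_j\sigma_i'\sigma_j']\ge p/2$, so that $\ee|\partial A|\ge \tfrac{p}{2}|E|\ge c_d\,pL^d$. This already exceeds the target $C_1(1-p)\sqrt p L^d/\sqrt{\log L}$ whenever $p\gg (\log L)^{-1}$, so in that regime it remains only to concentrate $|\partial A|$ around its mean. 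For smaller $p$ I would need a sharper per-edge estimate of the form $\pp(e\in\partial A)\ge c\sqrt p$: the heuristic is that a Gaussian perturbation of $J$ of variance $\sim p$ flips the sign of a given bond with probability proportional to $\sqrt p$, i.e.\ the Gaussian density times the perturbation scale acting on the nearly-tied local energy gap. Formally, I would interpolate $t\mapsto J^t$ for $t\in[0,p]$ and bound from below the expected number of times the event $\{\sum_{f\in\partial B}J^t_f\sigma_f=0\}$ occurs for a small local set $B\ni e$, via a Gaussian Kac--Rice / co-area-type computation.

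For concentration, I would write $|\partial A|=\sum_e\1[\sigma_i\sigma_j\ne\sigma_i'\sigma_j']$ as a sum of $\{0,1\}$-valued indicators and bound their pairwise covariances by a four-point generalization of Theorem~\ref{chaosthm}, giving decay in the minimum pairwise distance. This should yield $\var(|\partial A|)=O(|E|\,(\log L)^{O(1)})$, from which Chebyshev produces deviations of order $L^{d/2}\sqrt{\log L}$ with probability $1-C_3L^{-C_4}$, accounting for the polynomial error term. The exponential error $3e^{-C_2pL^d}$ should arise separately, from a large-deviation statement that the droplet itself is macroscopic, $|A|\ge c_0L^d$: I would obtain this by combining Corollary~\ref{chaoscor} (small site overlap in expectation) with an exponential Markov / Paley--Zygmund argument on $|A|=\sum_{i\in V^\circ}(1-\sigma_i\sigma_i')/2$, and the isoperimetric inequality on the cube then prevents the boundary from collapsing whenever $|A|$ is macroscopic.

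\emph{Main obstacle.} The technical heart is the sharpened expectation bound $\pp(e\in\partial A)\ge c\sqrt p$ in the small-$p$ regime: Theorem~\ref{chaosthm} supplies only the linear-in-$p$ estimate, and reaching $\sqrt p$ appears to require the continuous structure of the Gaussian disorder (a co-area/interpolation argument along $t\mapsto J^t$) rather than a single discrete flip comparison. The $\sqrt{\log L}$ loss in the final bound is most likely inherited from averaging this co-area identity over the $|E|\approx dL^d$ edges, combined with the covariance bound used in the Chebyshev step.
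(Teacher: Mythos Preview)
Your plan departs fundamentally from the paper's argument, and the two hardest steps you isolate are precisely the ones the paper \emph{avoids}.

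The paper does not lower-bound $\ee|\partial A|$ and then concentrate. Instead, it uses a deterministic inequality already obtained in the proof of Theorem~\ref{boundarythm}: since $\sigma(p)$ minimizes $H_{J(p)}$, one has $\Delta(A)=H_J(\sigma(p))-H_J(\sigma)\le \frac{2\sqrt{2p-p^2}}{1-p}\,|\partial A|\max_e|J'_e|$, hence
\[
|\partial A|\ \ge\ \frac{(1-p)\,\Delta(A)}{2\sqrt{p}\,M},\qquad M:=\max_e |J'_e|.
\]
The $\sqrt p$ therefore comes from the size of the perturbation in the \emph{energy} inequality, not from any per-edge flipping probability. The paper then proves $\Delta(A)\ge C_8 pL^d$ with probability $\ge 1-3e^{-C_9 pL^d}$ by writing $\Delta(A)=[H_J(\sigma(p))-H_{J(p)}(\sigma(p))]+[H_{J(p)}(\sigma(p))-H_J(\sigma)]$ and applying Sudakov minoration plus Gaussian concentration to each piece (the second via a symmetrized coupling). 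The polynomial error term is just the Gaussian tail of $M$: $\pp(M>K\sqrt{\log L})\le C L^{d-K^2/2}$.

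Your plan has genuine gaps. First, the sharpened bound $\pp(e\in\partial A)\ge c\sqrt p$ is not just hard but possibly false as stated: $\sigma_i\sigma_j$ is a global function of the disorder, not $\operatorname{sign}(J_e)$, and the heuristic ``Gaussian noise of variance $p$ flips a sign with probability $\sqrt p$'' does not apply. The co-area/Kac--Rice sketch would need control of the density of the local energy gap near zero, which is not available. Second, the concentration step is the more serious problem: the bond-overlap indicators $\1[\sigma_e\ne\sigma'_e]$ are not known to have decaying covariances---indeed the paper itself notes that the bond overlap is \emph{not} chaotic, so there is no analogue of Theorem~\ref{chaosthm} for the four-point quantity $\ee(\sigma_e\sigma'_e\sigma_f\sigma'_f)$ you would need, and $\var(|\partial A|)=O(|E|\,\mathrm{polylog})$ is unsubstantiated. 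Third, Paley--Zygmund on $|A|$ never gives an exponential tail, and the isoperimetric fallback yields only $|\partial A|\gtrsim L^{d-1}$, far short of $\sqrt p\,L^d$. The paper's route bypasses all of this by concentrating the scalar energy $\Delta(A)$ rather than the combinatorial quantity $|\partial A|$.
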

Note that under free or periodic boundary conditions, $A$ itself is not canonical, but $\partial A$ is well-defined. Consequently, if $p=L^{-\alpha+o(1)}$, then $|\partial A|$ is at least $L^{d-\alpha/2+o(1)}$, up to logarithmic factors, with high probability. In the regime $p\gg L^{-1}$, where Corollary~\ref{chaoscor} implies that the droplet has macroscopic volume, this yields an interface lower bound of order at least $L^{d-1/2}$, up to logarithmic factors. Thus the interface dimension is forced to be strictly larger than $d-1$ in the macroscopic-droplet regime. Even for smaller perturbations, the theorem shows that once $p\gg L^{-d}$, the droplet boundary is already polynomially large, of order at least $L^{d/2}$  up to logarithmic factors.

\subsection{Existence of multiple valleys}
Our next result concerns low energy macroscopic excitations. In a ferromagnet, overturning a set of spins in the ground state costs energy proportional to the size of the edge boundary. In the EA model, by contrast, the droplet picture predicts the existence of macroscopic regions whose energy cost is much smaller than the size of their boundary \cite{fisherhuse88}. This prediction is also one of the heuristic inputs behind disorder chaos \cite{braymoore87}.

We restrict attention to subsets of $V^\circ$ whose sizes lie between $\frac14|V^\circ|$ and $\frac34|V^\circ|$. Given a region $A\subset V^\circ$, let $\Delta(A)$ denote the energy cost of  overturning all spins in $A$ in the ground state (and keeping all other spins the same). We are interested in showing that there is some set $A$ with $\frac{1}{4}|V^\circ|\le |A|\le \frac{3}{4}|V^\circ|$ such that the ratio $\frac{\Delta(A)}{|\partial A|}$ is small, where $\partial A$ is the edge-boundary of $A$ --- that is, the set of all edges from $A$ to $V\setminus A$.  (Note that $\partial A$ is nonempty because of the bounds on the size of $A$.) To do this, let us define
\[
F := \min\biggl\{\frac{\Delta(A)}{|\partial A|} : A\subseteq V^\circ, \frac{|V^\circ|}{4}\le |A|\le \frac{3|V^\circ|}{4}\biggr\}.
\]
The following result shows that $F$ is small with high probability whenever $|V^\circ|$ and $\frac{|V^\circ|}{|B|}$ are larger than some power of $\log |E|$. As before, we first state the general result, and then specialize to the case of $V= \{0,1,\ldots, L\}^d$ in the corollary that follows. The proof is in Section \ref{boundarythmpf}.
\begin{thm}\label{boundarythm}
Let all notations be as in Theorem \ref{mainthm}, and let $F$ be defined as above. Then there is a positive universal constant $C_1$ and a positive constant $C_2$ depending only $\alpha$, $\beta$, $\gamma$ and $\delta$, such that for any $p\in (0,\frac{1}{2})$, we have
\[
\pp(F \ge C_1 \sqrt{p\log |E|}) \le \frac{1}{|E|} + \frac{C_2 (|V^\circ| p^{-\beta} + |B|^2 p^{-2\delta})}{|V^\circ|^2}.
\]
\end{thm}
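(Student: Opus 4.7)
The strategy is to take $A$ to be the (sign-normalized) disagreement set between the original ground state $\sigma$ of $H_J$ and the ground state $\sigma'$ of $H_{J^{(p)}}$, where $J^{(p)}_e:=(1-p)J_e+\sqrt{2p-p^2}\tilde J_e$ is a perturbation of the first kind and $\tilde J$ is an independent copy of $J$. I would show that with the stated probability this $A$ simultaneously satisfies $|V^\circ|/4\le |A|\le 3|V^\circ|/4$ and $\Delta(A)\le C\sqrt{p\log|E|}\,|\partial A|$, whence $F\le\Delta(A)/|\partial A|\le C\sqrt{p\log|E|}$.

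The size constraint follows from disorder chaos: after fixing the sign ambiguity in the free-boundary case, the infeasibility event $|A|\notin[|V^\circ|/4,\,3|V^\circ|/4]$ is contained in $\{R(p)^2>1/4\}$, and Markov combined with Theorem~\ref{mainthm} bounds its probability by $\tfrac{4}{|V^\circ|}+\tfrac{4C(|V^\circ|p^{-\beta}+|B|^2 p^{-2\delta})}{|V^\circ|^2}$. For the energy bound, the key input is that $\sigma'=\sigma^A$ is the ground state of $H_{J^{(p)}}$, so $H_{J^{(p)}}(\sigma^A)-H_{J^{(p)}}(\sigma)=2\sum_{e\in\partial A}J^{(p)}_e\sigma_i\sigma_j\le 0$; substituting the definition of $J^{(p)}$ and using $\Delta(A)=2\sum_{e\in\partial A}J_e\sigma_i\sigma_j\ge 0$ rearranges to
\[
(1-p)\Delta(A)\le 2\sqrt{2p-p^2}\,\biggl|\sum_{e\in\partial A}\tilde J_e\sigma_i\sigma_j\biggr|.
\]
Thus the task reduces to a uniform control of a Gaussian linear functional of $\tilde J$ whose coefficient pattern $\omega_e=\sigma_i\sigma_j\1_{e\in\partial A}$ lies in $\{-1,0,1\}^E$ but depends itself on $\tilde J$.

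To handle this I would take a union bound over all possible sign patterns: for each fixed $\omega$ with $|\supp\omega|=m$, $\sum_e\tilde J_e\omega_e\sim\mathcal N(0,m)$, and there are at most $\binom{|E|}{m}2^m$ such $\omega$. Balancing the Gaussian tail against this exponential count forces the threshold to scale as $m\sqrt{\log|E|}$ rather than $\sqrt{m\log|E|}$, and summing the resulting failure probabilities over $m\ge 1$ gives $|\sum_e\tilde J_e\omega_e|\le Cm\sqrt{\log|E|}$ simultaneously for every $\omega$ with probability at least $1-1/|E|$. Specializing to the specific sign pattern of interest (of support $|\partial A|$) and substituting into the previous display yields $\Delta(A)\le C'\sqrt{p}\,|\partial A|\sqrt{\log|E|}$ for $p\le 1/2$, and a final union bound over the feasibility and Gaussian failure events delivers the theorem. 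The main delicate point is this union bound: because there are exponentially many possible sign patterns, the per-pattern threshold must be proportional to $|\partial A|$ rather than $\sqrt{|\partial A|}$, which is precisely what converts the naive Gaussian scale into the stated $\sqrt{p\log|E|}$ rate. Since the union-bound event lives entirely on $\tilde J$, it applies uniformly irrespective of how $A$ is actually generated from $\tilde J$.
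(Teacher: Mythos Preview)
Your proof is correct and follows the same overall strategy as the paper: take $A$ to be the disagreement set between $\sigma$ and the perturbed ground state, control its size via Theorem~\ref{mainthm} and Markov, and bound $\Delta(A)$ through the inequality $(1-p)\Delta(A)\le 2\sqrt{2p-p^2}\,\bigl|\sum_{e\in\partial A}\tilde J_e\sigma_i\sigma_j\bigr|$ coming from $H_{J^{(p)}}(\sigma')\le H_{J^{(p)}}(\sigma)$.

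The one place you diverge is in controlling $\bigl|\sum_{e\in\partial A}\tilde J_e\sigma_i\sigma_j\bigr|$, which you flag as ``the main delicate point''. In fact no union bound over sign patterns is needed: the paper simply uses the trivial inequality
\[
\biggl|\sum_{e\in\partial A}\tilde J_e\sigma_i\sigma_j\biggr|\le |\partial A|\cdot\max_{e\in E}|\tilde J_e|,
\]
and then bounds $\pp(\max_e|\tilde J_e|\ge C\sqrt{\log|E|})\le 1/|E|$ by a single Gaussian tail bound over $|E|$ variables. Since $\max_e|\tilde J_e|$ does not depend on $A$ at all, there is no issue with $A$ being $\tilde J$-measurable, and the dependence on $|\partial A|$ comes out automatically. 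Your union bound over the $\binom{|E|}{m}2^m$ sign patterns recovers exactly the same $|\partial A|\sqrt{\log|E|}$ scale, so nothing is lost, but the argument is more elaborate than necessary.
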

Recall that if $V= \{0,1,\ldots, L\}^d$ with the usual boundary, then $|V^\circ|$ is of order $L^d$, $|B|$ is of order $L^{d-1}$, $\beta = d$, and $\delta =1$. Additionally, note that $|E|$ is of order $L^d$. Taking $p = KL^{-1}$ for some fixed $K\ge 1$ gives the following corollary, which shows that $F$ is at most of order $L^{-\frac{1}{2}}\sqrt{\log L}$ in any dimension.
\begin{cor}\label{overcor}
In the setting of Theorem \ref{boundarythm}, if $V= \{0,1,\ldots, L\}^d$ with the usual boundary and with any given disorder-independent boundary condition, then for any $K\ge 1$, 
\[
\pp(F\ge C_1 L^{-\frac{1}{2}}\sqrt{K \log L}) \le C_2 K^{-2}.
\]
where $C_1, C_2$ are positive constants that depend only on $d$. For free or periodic boundary, the $K^{-2}$ on the right improves to $K^{-d}$ for $d\ge 3$. 
\end{cor}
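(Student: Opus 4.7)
The plan is to derive Corollary \ref{overcor} as a direct specialization of Theorem \ref{boundarythm} to the cube $V=\{0,1,\ldots,L\}^d$, with an appropriate choice of the perturbation parameter $p$ as a function of $K$ and $L$.

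First I would verify the geometric hypotheses of Theorem \ref{mainthm} for the cube, which then feed into Theorem \ref{boundarythm}. With the graph distance being the $\ell^1$ metric, a ball of radius $r$ around any interior vertex contains at most $\alpha r^d$ points for some $\alpha=\alpha(d)$, so $\beta=d$. For the usual boundary, $|B|\asymp L^{d-1}$ and the $r$-neighborhood of $B$ has size at most $\gamma |B| r$ for $\gamma=\gamma(d)$, so $\delta=1$. I would also record $|V^\circ|\asymp L^d$ and $|E|\asymp L^d$. For free or periodic boundary, one has $V^\circ=V$ and $|B|=0$, so the $|B|^2 p^{-2\delta}$ contribution in Theorem \ref{boundarythm} drops out entirely.

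Next I would make the substitution $p = K^2/L$. One may assume $K\ge 1$ and $K^2\le L/2$, since otherwise the claimed bound $C_2 K^{-2}$ exceeds $1$ (resp.\ $C_2 K^{-d}$ in the free/periodic case, after enlarging $C_2$), making the inequality trivial. With this choice, the threshold in Theorem \ref{boundarythm} becomes
\[
C_1\sqrt{p\log|E|} \;\le\; C_1 \sqrt{\tfrac{K^2}{L}\cdot d\log(dL^d)} \;\le\; C_1'\, K\, L^{-1/2}\sqrt{\log L},
\]
after absorbing a constant depending only on $d$ into $C_1'$.

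It remains to bound the right-hand side of Theorem \ref{boundarythm}. Under the parameter choice above, $1/|E|$ and $4/|V^\circ|$ are both $O(L^{-d})\le O(K^{-2d})$; the bulk term is
\[
\frac{|V^\circ| p^{-\beta}}{|V^\circ|^2} \;\asymp\; \frac{1}{L^d (K^2/L)^d} \;=\; K^{-2d},
\]
while the boundary term is
\[
\frac{|B|^2 p^{-2\delta}}{|V^\circ|^2} \;\asymp\; \frac{L^{2d-2}(K^2/L)^{-2}}{L^{2d}} \;=\; K^{-4}.
\]
Summing and using $K\ge 1$ yields an overall bound of $C_2(d) K^{-2}$, which proves the usual-boundary case. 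For free or periodic boundary, the $K^{-4}$ term is absent, so the bound reduces to $O(K^{-2d})$, and for $d\ge 3$ this is at most $C_2(d) K^{-d}$, giving the claimed improvement.

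There is essentially no single hard step here: the content of the corollary is bookkeeping on the exponents once the correct scaling $p\asymp K^2/L$ is identified. The only mild subtlety is handling the edge cases $K^2>L/2$ or $K<1$, which are dealt with by enlarging $C_2(d)$ so that the stated probability bound is trivially at least $1$.
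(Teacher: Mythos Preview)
Your proposal is correct and follows the same approach as the paper: specialize Theorem~\ref{boundarythm} to the cube and substitute a value of $p$ proportional to $L^{-1}$. The paper's text says to take $p=KL^{-1}$, which after the relabeling $K\mapsto K^2$ is exactly your choice $p=K^2/L$; your version matches the corollary's threshold $C_1 K L^{-1/2}\sqrt{\log L}$ directly, and the remaining arithmetic you carry out is precisely the intended bookkeeping.
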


In dimension one, a simple argument shows that the bound above is not sharp. With high probability, one can find two edges $e$ and $f$, separated by distance of order $L$, such that both $J_e$ and $J_f$ are of order $L^{-1}$. Overturning all spins between $e$ and $f$ then produces a region of size of order $L$ with energy cost of order $L^{-1}$. Thus, in dimension one, $\E(F)$ is at most of order $L^{-1}$. The bound in Corollary~\ref{overcor} may be suboptimal in higher dimensions as well, but the correct order for $d\ge 2$ remains open.

The physics literature is not unanimous about the size of $F$. For example, there are competing claims, made via numerical studies, that in $d=3$, the energy cost $\Delta(A)$  can be as small as $O(L^{\frac{1}{5}})$~\cite{braymoore87}, or $O(1)$~\cite{krzakalamartin00}. More generally, the physics literature (e.g., \cite{braymoore87, fisherhuse88}) indicates that there are regions where $\frac{\Delta(A)}{\sqrt{|\partial A|}}$ is small, and not just  $\frac{\Delta(A)}{|\partial A|}$. This seems inaccessible by the methods of this paper. We leave it as an open question.

Note that for a macroscopic region $A$, $|\partial A|$ is at least of order $L^2$ in $d=3$. The main difficulty with simulation studies is that finding the ground state is an NP-hard problem, with no good algorithm even for the ``average case''. Simulations can be carried out with only rather small values of $L$ (e.g., $L=12$ in \cite{krzakalamartin00}).

As a counterpart of Theorem \ref{boundarythm}, the next result shows that large regions with small interface energies, whose existence is guaranteed by Theorem \ref{boundarythm}, are exceptionally rare. The probability that any given region has a small interface energy is exponentially small in the size of the boundary. This is the content of the next theorem, proved in Section~\ref{oldthmpf}.
\begin{thm}\label{oldthm}
In the setting of Theorem \ref{mainthm}, there are positive constants $C_1$, $C_2$ and $C_3$ depending only on  the maximum degree of $G$, such that for any $A\subset V^\circ$,
\[
\pp\biggl(\frac{\Delta(A)}{|\partial A|} < C_1\biggr) \le C_2e^{-C_3|\partial A|}. 
\]
\end{thm}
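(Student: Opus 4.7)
The plan is to prove that $\ee[\Delta(A)] \ge c_1 |\partial A|$ for a constant $c_1$ depending only on the maximum degree $D$ of $G$, and then to establish exponential concentration of $\Delta(A)$ about its mean.

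For the mean, I write $\Delta(A) = 2\sum_{e \in \partial A} J_e \sigma_e$ and compute each $\ee[J_e \sigma_e]$ by Gaussian integration by parts, conditionally on $J_{-e} := (J_{e'})_{e' \ne e}$. For fixed $J_{-e}$, the ground state takes only two values as $J_e$ varies, crossing at a deterministic threshold: $\sigma_e = \sgn(J_e - c_e)$ where $c_e := (m_+ - m_-)/2$ with $m_\pm := \min\{-\sum_{e' \ne e} J_{e'}\tau_{e'}: \tau_e = \pm 1, \ \tau \text{ respects the boundary condition}\}$. Stein's identity then yields $\ee[J_e \sigma_e \mid J_{-e}] = 2\phi(c_e)$, where $\phi$ is the standard Gaussian density. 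Since any edge $e \in \partial A$ has an endpoint $u \in A \subseteq V^\circ$, flipping the spin $\tau_u$ in a minimizer of $m_-$ produces an admissible configuration with $\tau_e = +1$ whose energy exceeds $m_-$ by at most $2\sum_{e' \ni u, e' \ne e}|J_{e'}|$, giving the deterministic bound $|c_e| \le \sum_{e' \ni u, e' \ne e}|J_{e'}|$. Hence $\ee|c_e| \le (D-1)\sqrt{2/\pi}$, and Markov's inequality combined with the even, unimodal character of $\phi$ on $\mathbb{R}$ yields $\ee\phi(c_e) \ge c_0(D) > 0$. Summing over $e$ gives $\ee\Delta(A) \ge c_1 |\partial A|$ with $c_1 = 4c_0(D)$.

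The concentration step is more delicate. Viewed as a function of $J \in \R^E$, $\Delta(A)$ is only $O(\sqrt{|E|})$-Lipschitz (because of discontinuous jumps where the ground state changes), so Gaussian concentration applied directly is too weak. The crucial move is a change of variables: set $\eta_e := J_e \sigma_e$. Conditional on the ground state equaling any specific configuration $\sigma^*$, the map $J \mapsto \eta = J \odot \sigma^*$ (coordinatewise product) is a measure-preserving isometry of $\R^E$, and it sends the polyhedral cone $\{J: \sigma^* \text{ is a ground state}\}$ to the convex cone
\[
P := \Bigl\{\eta \in \R^E : \textstyle\sum_{e \in C}\eta_e \ge 0 \text{ for every cut } C \text{ of } G \text{ compatible with the boundary condition}\Bigr\},
\]
which does \emph{not} depend on $\sigma^*$. (The free-boundary case is handled by the $\sigma \leftrightarrow -\sigma$ symmetry of both $\Delta(A)$ and $P$.) Thus $\eta \sim N(0, I_E)|_P$ unconditionally, and
\[
\Delta(A) \;=\; 2\sum_{e \in \partial A} J_e\sigma_e \;=\; 2\sum_{e \in \partial A}\eta_e
\]
is a linear functional of $\eta$ whose gradient has Euclidean norm $\sqrt{|\partial A|}$.

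Since $P$ is convex, $N(0, I_E)|_P$ is log-concave with log-density Hessian dominating $I_E$; by Caffarelli's contraction theorem (or equivalently the Bakry--\'Emery criterion applied to a smooth approximation of the convex indicator potential), it inherits the logarithmic Sobolev inequality of the ambient standard Gaussian, and Herbst's argument then produces sub-Gaussian concentration for Lipschitz functions with the ambient Lipschitz constant:
\[
\pp\Bigl(\Bigl|\textstyle\sum_{e\in\partial A}\eta_e - \ee\sum_{e\in\partial A}\eta_e\Bigr| > t\Bigr) \;\le\; 2\,e^{-t^2/(2|\partial A|)}.
\]
Taking $t = c_1|\partial A|/4$ and combining with the mean lower bound proves the theorem with $C_1 = c_1/4$, $C_2 = 2$, and $C_3 = c_1^2/32$, all depending only on $D$. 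The heart of the argument is the change-of-variables step, which replaces the apparent $\sqrt{|E|}$ Lipschitz constant of $\Delta(A)$ by the much smaller $\sqrt{|\partial A|}$; this is the main obstacle to overcome, after which the conclusion follows by combining the Stein-identity mean estimate with standard log-concave concentration.
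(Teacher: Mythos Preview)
Your proof is correct and takes a genuinely different route from the paper's. The paper's argument is considerably more elementary: it calls an edge $e \in \partial A$ ``special'' if $J_e > S_e + 2$ where $S_e = \sum_{f \sim e}|J_f|$, picks a subset $K \subseteq \partial A$ of size $\ge c|\partial A|$ in which no two edges share an endpoint or a common neighbor, and shows by a direct manipulation (subtract $1$ from $J_e$ at each special edge without changing the ground state) that deterministically $\Delta(A) \ge 2X$, where $X$ is the number of special edges in $K$. Since the events $\{e \text{ special}\}$ for $e \in K$ are independent with probability bounded below by a degree-dependent constant, $X$ is a sum of independent Bernoullis and Hoeffding's inequality finishes the proof.

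Your approach trades elementarity for a clean structural picture: the change of variables $\eta_e = J_e\sigma_e$ maps \emph{every} ground-state cone isometrically to the same convex polyhedral cone $P$, so that $\Delta(A)$ becomes a fixed linear functional of a standard Gaussian conditioned on a fixed convex set --- a pleasing decoupling of the combinatorics of the ground state from the randomness. The price is that the concentration step then rests on a log-Sobolev inequality for a Gaussian conditioned on a convex body, which is true but noticeably less elementary than Hoeffding, and is specific to Gaussian disorder. The paper's argument, by contrast, works verbatim for any disorder distribution with enough mass in the tails (so that $\pp(J_e > S_e + 2)$ is bounded below), and never leaves the realm of sums of independent indicators.

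One point that would need a sentence more of care in a full writeup: the standard statements of Caffarelli's contraction theorem and of Bakry--\'Emery assume full support or a smooth potential, neither of which holds for $\iota_P$. The smooth-approximation argument you allude to does work --- replace $\iota_P$ by $k\,d(\cdot,P)^2$, apply Bakry--\'Emery to get the sub-Gaussian MGF bound for bounded Lipschitz functions, and pass to the weak limit $k\to\infty$ --- but this limiting step should be spelled out rather than asserted.
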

This shows that the optimal $A$ in the definition of $F$ cannot be any given region, but rather, one that arises ``at random''. Indeed, in our proof of Theorem \ref{boundarythm}, an $A$ with a small value of $\frac{\Delta(A)}{|\partial A|}$ is obtained as the region of overturned spins after applying a small perturbation of the first kind.

\subsection{Size of the critical droplet}
The next result concerns the size of the so-called ``critical droplet'' of an edge, an object that has attracted some recent attention~\cite{arguinetal21}, where the critical droplets in two related models --- the ``highly disordered model'' and the ``strongly disordered model'' --- were studied and shown to be finite in the infinite volume limit. The critical droplet in our context is defined as follows. Take any edge $e = \{i,j\}$. Let $\sigma^1$ be the energy minimizing configuration under the constraint that $\sigma_i = \sigma_j$, and let $\sigma^2$ be the energy minimizing configuration under the constraint that $\sigma_i = -\sigma_j$. It is easy to see that $\sigma^1$ and $\sigma^2$ do not depend on the value of $J_e$, and for any value of $J_e$ (keeping all other fixed), the ground state of the system is either $\sigma^1$ or $\sigma^2$. The critical droplet is the set of sites where $\sigma^1$ and $\sigma^2$ disagree. Under the free or periodic boundary conditions on $G$, this is not completely well-defined, because if a set $A$ fits the above definition, then so does $V \setminus A$. In this case we define the size of the critical droplet (which is our main object of interest) as the minimum of $|A|$ and $|V \setminus A|$.

Let $D(e)$ be the critical droplet of an edge $e$, in the setting of Theorem \ref{mainthm}. Note that unlike the ``droplet'' $A$ in Theorem \ref{fractalthm}, it is not hard to show that the critical droplet $D(e)$ is a connected set. The following theorem, proved in Section \ref{dropletthmpf},  gives a lower bound on the expected value of the size of $D(e)$. 
\begin{thm}\label{dropletthm}
Let $D(e)$ be as above. Then, in the setting of Theorem \ref{mainthm}, 
\[
\frac{1}{|E|} \sum_{e\in E} \ee|D(e)| \ge \frac{C|V^\circ|}{|E| \max\bigl\{|V^\circ|^{-\frac{1}{\beta}}, \bigl(\frac{|B|}{|V^\circ|}\bigr)^{\frac{1}{\delta}}\bigr\}},
\]
where $C$ is a positive constant depending only on $\alpha$, $\beta$, $\gamma$ and $\delta$.
\end{thm}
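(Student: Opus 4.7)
The plan is to force the unperturbed and perturbed ground states to disagree on many sites using Theorem \ref{mainthm}, and then to bound the expected number of such disagreements from above by $\frac{p}{2}\sum_e \ee|D(e)|$ via a sequential edge-resampling coupling. Combining these two bounds and optimizing over $p$ yields the result.

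First I would apply a perturbation of the second kind with parameter $p\in(0,1)$ and process the edges in a fixed order $e_1,\ldots,e_{|E|}$, letting $J^{(k)}$ denote the environment after the first $k$ edges have each been independently replaced with a fresh standard Gaussian with probability $p$. Then $J^{(0)}=J$, $J^{(|E|)}$ has the law of the full perturbation, and every intermediate $J^{(k)}$ is coordinatewise iid standard Gaussian and therefore has the same distribution as $J$. At step $k$, conditional on $J^{(k-1)}_{-e_k}$ and on the event that $e_k$ is resampled (probability $p$), the old and new values of $J_{e_k}$ are independent standard Gaussians, so they lie on opposite sides of the critical value of $J_{e_k}$ with probability at most $\tfrac12$; when they do, the ground state flips across the critical droplet $D(e_k, J^{(k-1)})$, contributing its size to the Hamming distance (in the free boundary case, modulo a global sign, giving precisely the symmetrized droplet size $\min(|A|,|V\setminus A|)$ used in the theorem). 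Writing $F(p)$ for the set of sites on which the two ground states disagree (symmetrized in the free case), the triangle inequality together with the distributional equivalence $J^{(k-1)}\sim J$ gives
\[
\ee|F(p)| \le \frac{p}{2}\sum_{e\in E} \ee|D(e)|.
\]

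Next I would produce a matching lower bound on $\ee|F(p)|$. In both boundary regimes, one has $\ee|F(p)| \ge \tfrac12|V^\circ|(1 - \sqrt{\ee R(p)^2})$, since $|F(p)| = \tfrac12|V^\circ|(1-|R(p)|)$ after sign fixing in the free case (where $|F(p)|\le|V^\circ|/2$) and $\ee|R(p)|\le\sqrt{\ee R(p)^2}$ by Jensen's inequality. By Theorem \ref{mainthm}, if $p^*$ is taken to be a sufficiently large constant multiple of $\max\{|V^\circ|^{-1/\beta},(|B|/|V^\circ|)^{1/\delta}\}$, then $\ee R(p^*)^2\le 1/4$, provided $|V^\circ|$ exceeds an absolute constant (below which the claim is trivial from $\ee|D(e)|\ge 1$), which yields $\ee|F(p^*)|\ge|V^\circ|/4$. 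Combining with the displayed upper bound at $p=p^*$ gives $\sum_e\ee|D(e)|\ge|V^\circ|/(2p^*)$, and dividing by $|E|$ is the desired inequality.

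The main obstacle to making this airtight is the free boundary case: the ground state is defined only up to a global sign flip, so a naive sequential tracking of the ``flip set'' at each step could charge the wrong half $V\setminus D(e_k)$, which may be much larger than the theorem's $|D(e_k)|$. This is resolved by working throughout with the Hamming distance on the two-element equivalence classes $\{\sigma,-\sigma\}$, which equals $\min(|A|,|V\setminus A|)$ and still obeys a triangle inequality whose per-step increment is exactly the theorem's $|D(e_k)|$, so the inequality $|F(p)|\le \sum_k \mathbf{1}(\text{flip at } k)|D(e_k,J^{(k-1)})|$ continues to hold.
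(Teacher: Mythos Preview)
Your argument is correct and follows the same overall strategy as the paper: sequentially resample the edges, bound the total disagreement by a triangle inequality in terms of single-edge flips, identify the single-edge flip with the critical droplet, and then use Theorem~\ref{mainthm} to force a macroscopic disagreement at the optimal scale of $p$.

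The tactical execution differs in two places. First, the paper avoids the sign ambiguity in the free boundary case by never looking at the flip set directly: it works with the sign-invariant quantities $\sigma_i\sigma_j$, uses the identity $\sum_{i,j}\mathbf{1}\{\sigma_i\sigma_j\ne\tilde\sigma_i\tilde\sigma_j\}=2|A|(|V^\circ|-|A|)\le |V^\circ||A|$, and relates this to $\ee(R(p)^2)$ via the formula $\ee(R(p)^2)=\frac{2}{|V^\circ|^2}\sum_{i,j}\bigl(\tfrac12-\pp(\sigma_i\sigma_j\ne\sigma'_i\sigma'_j)\bigr)$. Your quotient-Hamming-distance route is an equally valid alternative; you correctly note that $d_Q(\sigma,\tau):=\min(d_H(\sigma,\tau),d_H(\sigma,-\tau))$ is a genuine metric and that its per-step increment is exactly the symmetrized $|D(e_k)|$. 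Second, the paper uses a slightly different coupling (sample $X\sim\mathrm{Bin}(|E|,p)$, then resample a uniform random size-$X$ subset), which makes each $(\sigma^{k-1},\sigma^k)$ exchangeable with $(\sigma,\tilde\sigma)$ for a \emph{uniformly chosen} edge; your fixed-order coupling gives the same bound after summing over $k$ and is arguably more direct. One small point you glossed over: to ensure $p^*<1$ you need both $|V^\circ|$ and $|V^\circ|/|B|$ larger than constants depending on $\alpha,\beta,\gamma,\delta$ (not just $|V^\circ|$), but as in the paper this is handled by noting that otherwise the right-hand side of the theorem is at most $CK|V^\circ|/|E|\le CK$ and absorbing into the constant, since $\ee|D(e)|\ge1$.
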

Specializing to the case $V= \{0,1,\ldots,L\}^d$ with the usual boundary (or with free or periodic boundary), where $|V^\circ|$ and $|E|$ are of order $L^d$, $|B|$ is of order $L^{d-1}$, $\beta=d$ and $\delta = 1$, we obtain the following corollary. 
\begin{cor}
If $V= \{0,1,\ldots,L\}^d$ with the usual boundary and with any given disorder-independent boundary condition (or with free or periodic boundary), then 
\[
\frac{1}{|E|} \sum_{e\in E} \ee|D(e)| \ge C(d) L
\]
for all $d\ge 1$, where $C(d)$ is a positive constant depending only on $d$.
\end{cor}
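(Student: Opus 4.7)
The plan is to apply Theorem~\ref{dropletthm} directly, with the graph $G$ taken to be the cube $\{0,1,\ldots,L\}^d$ and all parameters specialized accordingly. The first step is to verify the two geometric hypotheses of Theorem~\ref{mainthm} (which are what Theorem~\ref{dropletthm} inherits). For any vertex $i$ in the cube, the number of vertices $j$ with graph distance $d(i,j) \le r$ is bounded by the number of lattice points in a ball of radius $r$ in $\Z^d$, which is at most $\alpha r^d$ for an absolute constant $\alpha$; hence we may take $\beta = d$. For the usual boundary $B$ (vertices with at least one coordinate in $\{0,L\}$), the $r$-neighborhood of $B$ is contained in the union of $2d$ slabs of thickness $r$, each of size $O(L^{d-1} r)$, so we have $|\{j : d(j,B) \le r\}| \le \gamma |B| r$, giving $\delta = 1$. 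For free or periodic boundary, $B = \emptyset$ and only the first estimate is needed.

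The second step is a routine substitution of orders of magnitude. We have $|V^\circ|$ of order $L^d$, $|E|$ of order $L^d$, and (for the usual boundary) $|B|$ of order $L^{d-1}$. Plugging into the maximum in the denominator of the bound in Theorem~\ref{dropletthm},
\[
|V^\circ|^{-1/\beta} \asymp (L^d)^{-1/d} = L^{-1}, \qquad \Bigl(\tfrac{|B|}{|V^\circ|}\Bigr)^{1/\delta} \asymp \tfrac{L^{d-1}}{L^d} = L^{-1},
\]
so the maximum is of order $L^{-1}$. The right-hand side of Theorem~\ref{dropletthm} then reduces to
\[
\frac{C \cdot L^d}{L^d \cdot L^{-1}} = C\, L,
\]
with $C$ depending only on $\alpha, \beta, \gamma, \delta$, which in turn depend only on $d$. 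For free or periodic boundary the second term in the max drops out, but the first still gives $L^{-1}$, yielding the same conclusion $C(d) L$.

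Since this corollary is a purely mechanical specialization of Theorem~\ref{dropletthm}, there is really no obstacle: the nontrivial content lies entirely in the theorem itself. The only modest care needed is to confirm that the implicit multiplicative constants arising from rounding (e.g., $(L+1)^d$ versus $L^d$, or the exact surface count versus $2dL^{d-1}$) can be absorbed into the single dimension-dependent constant $C(d)$, which is immediate.
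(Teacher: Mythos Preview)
Your proposal is correct and follows exactly the paper's own approach: the paper derives this corollary by the same direct substitution of $|V^\circ|\asymp L^d$, $|E|\asymp L^d$, $|B|\asymp L^{d-1}$, $\beta=d$, $\delta=1$ into Theorem~\ref{dropletthm}. There is nothing to add.
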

In particular, under the periodic boundary condition on $V = \{0,1,\ldots,L\}^d$, we have $\ee|D(e)|\ge C(d)L$ for any $e$. This has the following consequence for $d\ge 2$. Since $|D(e)|\le \frac{1}{2}|V| = \frac{1}{2}L^d$ (due to the periodic boundary condition), an isoperimetric inequality of \citet[Theorem 8]{bollobasleader91} implies that
\begin{align*}
|\partial D(e)| &\ge \min_{1\le r\le d} 2|D(e)|^{1-\frac{1}{r}} rL^{\frac{d}{r} -1}\\
&\ge \min_{1\le r\le d} 2|D(e)|^{1-\frac{1}{r}} r(2|D(e)|)^{\frac{\frac{d}{r} -1}{d}}\ge 2 |D(e)|^{1-\frac{1}{d}}.
\end{align*}
Thus, we get the following corollary.
\begin{cor}
Take any $d\ge 2$ and $L\ge 1$. For $V = \{0,1,\ldots,L\}^d$ with periodic boundary condition, we have that for any edge $e$,
\[
\ee|\partial D(e)|^{\frac{d}{d-1}} \ge C(d) L,
\]
where $C(d)$ is a positive constant depending only on $d$.
\end{cor}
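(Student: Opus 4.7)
The plan is to combine the isoperimetric bound derived in the paragraph immediately preceding this corollary with the previous corollary giving $\ee|D(e)| \ge C(d) L$ under the periodic boundary condition. Because the periodic box is vertex- and edge-transitive, translation invariance of the periodic EA model makes $\ee|D(e)|$ the same for every edge $e$ (the joint law of the disorder is invariant under torus translations, and translating an edge takes critical droplets to critical droplets), so the average bound from the earlier corollary upgrades to the pointwise statement $\ee|D(e)| \ge C(d) L$ for every edge $e$.

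The first step is purely deterministic. The computation preceding the corollary establishes
\[
|\partial D(e)| \ge 2\,|D(e)|^{1-\frac{1}{d}},
\]
valid whenever $|D(e)| \le \tfrac{1}{2} L^d$, which holds automatically in the periodic case by the min convention in the definition of $|D(e)|$. Raising both sides to the power $\frac{d}{d-1}$ yields the pointwise bound
\[
|\partial D(e)|^{\frac{d}{d-1}} \ge 2^{\frac{d}{d-1}}\, |D(e)|.
\]

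The second step is to take expectations and apply the previous corollary:
\[
\ee|\partial D(e)|^{\frac{d}{d-1}} \ge 2^{\frac{d}{d-1}}\, \ee|D(e)| \ge 2^{\frac{d}{d-1}} C(d)\, L,
\]
which is the claimed bound after absorbing the numerical factor into $C(d)$.

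There is no real obstacle here: the hardest ingredient (the Bollobás--Leader isoperimetric inequality and its simplification to $2|D(e)|^{1-1/d}$) is already carried out in the display preceding the corollary, and the lower bound on $\ee|D(e)|$ is the previous corollary. The only conceptual point worth flagging is the passage from the edge-averaged bound $\frac{1}{|E|}\sum_e \ee|D(e)| \ge C(d) L$ to a per-edge bound, which is justified by the translation symmetry of the periodic model.
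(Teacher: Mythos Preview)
Your proof is correct and matches the paper's approach exactly: the paper derives the pointwise isoperimetric bound $|\partial D(e)| \ge 2|D(e)|^{1-1/d}$ just before the corollary and notes (via translation invariance under periodic boundary) that $\ee|D(e)| \ge C(d)L$ for every edge, leaving the step of raising to the power $d/(d-1)$ and taking expectations implicit in ``Thus, we get the following corollary.'' You have simply made that implicit step explicit.
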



\subsection{Polynomial decay of correlations}
Our final result is about the decay of correlations in the ground state. This is slightly tricky to define, for the following reasons. First, it is not hard to see that if we integrate out the disorder, then the spins are all mutually independent and identically distributed. On the other hand, if we fix the disorder, then the spins become deterministic (under a given boundary condition). Thus, the only sensible way to understand the decay of correlations is to look at the dependence of the spin at the center of a cube on the boundary condition, after fixing the disorder. However, it is trivial to see that the spin at a site is heavily dependent on the boundary condition, since overturning all the spins on the boundary also overturns the spin at any given site. A nontrivial question emerges only if we consider the dependence of $\sigma_i\sigma_j$ on the boundary, where $i$ and $j$ are {\it neighboring} sites, and $\sigma$ is the ground state. 

The next theorem, proved in Section \ref{decaythmpf},  shows that for neighboring sites $i$ and $j$, the dependence of $\sigma_i\sigma_j$ on the boundary condition decreases at most like the inverse of the distance to the boundary.

\begin{thm}\label{decaythm}
Take $d\ge 2$ and $L\ge 2$. Let $V\subseteq \Z^d$ be the nearest-neighbor box of side length $L$ centered at the origin, and let $B$ be its vertex boundary. For each boundary condition $\gamma\in\{-1,1\}^{B}$, let $\sigma^\gamma$ be the corresponding ground state. Fix a neighbor $j$ of the origin, and let $\me$ be the event that the value of $\sigma^\gamma_0\sigma^\gamma_j$ is not the same for all choices of $\gamma$. Then
\[
    \P(\me)\ge \frac{1}{32L}.
\]
\end{thm}

This result contrasts with the recent proof of exponential decay of correlations in the two-dimensional random field Ising model (RFIM) by \citet{dingxia21}. The RFIM is a disordered system with Hamiltonian
\[
    -\sum_{\{i,j\}\in E}\sigma_i\sigma_j-\sum_{i\in V}J_i\sigma_i,
\]
where the $J_i$'s are i.i.d.~symmetric random variables, often taken to be Gaussian. A major technical advantage of the RFIM, in comparison with the EA model, is the FKG inequality. Ding and Xia showed, among other things, that if $E$ denotes the event that the ground state spin at the center of a box is not the same for all boundary conditions, then $\P(E)$ decays exponentially in the side length of the box in dimension two. Theorem~\ref{decaythm} shows that the analogous boundary-condition sensitivity for a nearest-neighbor spin product in the EA model cannot decay faster than order $L^{-1}$. This highlights a qualitative difference between the two models. In dimensions $d\ge 3$, the RFIM exhibits long-range order \cite{dingzhuang21, imbrie85}; it remains open whether an analogous statement holds for the EA model.

\subsection{Proof idea} 
A common mechanism underlies the proofs of these results. The key input is a spectral observation about ground state spin products. If $\sigma_x\sigma_y$ is expanded, as a function of the Gaussian disorder, in the Hermite basis, then every nonzero term in the expansion involves a set of edges that either connects $x$ to $y$, or connects both $x$ and $y$ to the boundary. Consequently, when $x$ and $y$ are far apart and far from the boundary, the low degree part of this expansion vanishes. Small Ornstein--Uhlenbeck perturbations damp high degree Hermite components, which yields the disorder chaos estimates. The droplet, multiple valley, and critical droplet results are then obtained by applying these estimates to the set of spins changed by the perturbation.

\subsection{Related literature and open problems}
The phenomenon of chaos in lattice spin glasses was proposed in the physics literature by \citet{fisherhuse86} and \citet{braymoore87}. Disorder chaos for the SK model was proved in \cite{chatterjee09, chatterjee14}. In \cite{chatterjee09}, it was also shown that the bond overlap in the EA model is not chaotic, in the sense that its value does not drop to zero under a small perturbation. This still leaves open the possibility that it drops to a nonzero value strictly less than the value at zero perturbation, which is the stronger definition of chaos for the bond overlap.  

To be more precise, let $\sigma$ and $\sigma'$ be the ground states in the original and perturbed environments, as in Theorem \ref{chaosthm}. Then Theorem \ref{chaosthm} shows  that $\ee(\sigma_i\sigma_j \sigma_i'\sigma_j')$ drops sharply to zero as the perturbation parameter $p$ increases from $0$ to a small positive value, if $i$ and $j$ are far apart. Now suppose $i$ and $j$ are neighbors. Then it was shown in \cite{chatterjee09, chatterjee14} that $\ee(\sigma_i\sigma_j \sigma_i'\sigma_j')$ will not drop to zero --- but does it drop sharply to a value less than $1$? That is, is it true that under the  periodic boundary condition on $\{-L,\ldots,L\}^d$, for fixed neighboring sites $i$ and $j$,
\[
\lim_{p\to 0} \lim_{L\to\infty} |\ee(\sigma_i\sigma_j \sigma_i'\sigma_j')| < 1?
\]
An answer to the above question will tell us whether the bond overlap in the EA model is chaotic (in the stronger sense) with respect to small perturbations in the disorder, or not.

Further investigations of  disorder chaos in the mean-field setting were carried out in \cite{chen13, chen14, chenpanchenko18, chenetal15, chensen17, auffingerchen16, chenetal18, chenetal17, eldan20}.  The related notion of temperature chaos in mean-field models was investigated in \cite{panchenko16, chenpanchenko17, benarousetal20, subag17}. Connections with computational complexity were explored in \cite{gamarnik21, huangsellke21}, and with noise sensitivity in \cite{garbansteif14}. 

In the lattice setting, there are fewer results, reflecting the general dearth of rigorous results for short-range models. The absence of disorder chaos in the bond overlap of the EA model, in the narrow sense that was proved for Gaussian couplings in \cite{chatterjee09, chatterjee14}, has been recently generalized to non-Gaussian couplings by  \citet{arguinhanson20}. A very interesting connection between disorder chaos in the bond overlap and the presence of incongruent states (explained below) was proved by \citet*{arguinetal19}, who showed that if there is no disorder chaos (in the stronger sense), then incongruent ground states cannot exist, at least as limits of finite volume ground states with disorder-independent boundary conditions. 

The problem of incongruent ground states is one of the central open problems for short-range spin glasses. The problem is stated most clearly in the infinite volume setting. Consider the EA Hamiltonian on the whole of $\zz^d$ instead of a finite region. The notion of ``minimizing the energy'' no longer makes sense, but the difference between the energies of two states that differ only at a finite number of sites is well-defined and finite. An infinite volume state is called a ground state if overturning any finite number of spins results in an increase in the energy. It was shown  by \citet{newmanstein92} that the number of ground states is almost surely equal to a constant depending on the dimension and the distribution of the disorder. The main question is whether this number is greater than two in some dimension and for some symmetric and continuous distribution of the disorder. Obviously, if $\sigma$ is an infinite volume ground state, then so is $-\sigma$, and so the number of ground states is at least two. Two ground states that are not related in this way are called incongruent ground states. The above question is the same as asking whether there can exist a pair of incongruent ground states. This question remains unanswered. The greatest progress on this topic  was made by \citet{newmanstein01}, who showed that in dimension two, if there is a pair of incongruent ground states, then there is a single doubly infinite ``domain wall'' dividing them. This result was used by \citet*{arguinetal10} to prove that there is a unique infinite volume ground state in the EA model on the half-plane  $\zz\times \{0,1,\ldots\}$ under a certain sequence of boundary conditions. The boundary condition was later eliminated by~\citet{arguindamron14}, who showed the number of ground state  pairs for the EA model on the half-plane is either $1$ or $\infty$. A related result by~\citet{bergertessler17} shows that for the ground state of the EA model on $\zz^2$, ``unsatisfied edges'' (i.e., where $\sigma_i\sigma_j\ne \mathrm{sign}(J_{ij})$) do not percolate. 

The absence of incongruent infinite volume ground states, if true, will have the following consequence in finite volume. Any two states that are nearly energy-minimizing will locally look like a pair of congruent states (i.e., either equal or negations of each other), although they may be globally quite different. The ``almost orthogonal'' states produced by the small perturbations of the disorder in Theorem \ref{mainthm} may (or may not) have this property. In the physics literature, this is known as ``regional congruence''~\cite{husefisher87}.

An important contribution to the study of the EA model from the mathematical literature is the concept of metastates, introduced by \citet{aizenmanwehr89, aizenmanwehr90}. A zero-temperature metastate is a measurable map taking the disorder in infinite volume to a probability measure on the set of ground states. \citet{aizenmanwehr89, aizenmanwehr90} showed that metastates exist, and an explicit construction and interpretation was given later by \citet{newmanstein98}. Metastates capture some aspects of the chaotic nature of spin glasses, such as the ``chaotic size dependence'' proved in \cite{newmanstein92}, which means that the ground state in a finite region is chaotic with respect to changes in the size of the region. For some recent results and different perspectives on metastates, see~\cite{cotaretal18}.

Another topic that has received considerable attention in the mathematical literature is the question of fluctuations of the ground state energy (and more generally, the free energy at any temperature). This began with the aforementioned papers of \citet{aizenmanwehr89, aizenmanwehr90}, who showed that the fluctuations are of the same order as the volume of the system. The motivation for studying fluctuations is that one can connect it to the question of phase transitions via the ``Imry--Ma argument''~\cite{imryma75}. This was made precise in \cite{aizenmanwehr89, aizenmanwehr90}. For further developments in the study of fluctuations, and especially the important topic of interface energy fluctuations, we refer to \cite{arguinetal14, arguinetal16, stein16, darioetal21} and references therein.

Besides the above, the few other results that have been proved rigorously about the EA model include stochastic stability and the Ghirlanda--Guerra identities in a weak form for the edge overlap. We refer to the excellent monograph by \citet{contuccigiardina13} for details.


One of the great unsolved questions in spin glass theory concerns the validity of the ``Parisi picture''~\cite{parisi07} versus the ``droplet theory'' of \citet{fisherhuse86}. Conclusively settling this controversy has remained out of the reach of rigorous mathematics to this day. Theorem~\ref{boundarythm} and Corollary~\ref{overcor} in the present paper are related to this problem. As explained by \citet{krzakalamartin00}, the Parisi picture implies that one can overturn all spins in a macroscopic subset of $\{0,1,\ldots,L\}^3$ with $O(1)$ energy cost, whereas the droplet theory implies that the minimum cost grows as a small positive power of $L$. While Corollary \ref{overcor} does not settle this debate, it is the first result to show that one can indeed find macroscopic regions with interface energies that are negligible compared to the size of the interface.

An interesting question about the bond overlap and the site overlap is whether there is a relation between these two quantities. The property that they approximately behave like functions of each other at large volumes was introduced in \citet{parisiricci-tersenghi00} and numerically verified by \citet*{contuccietal06} and further developed in \citet*{contuccietal07}. This is known as ``overlap equivalence''. The overlap equivalence picture contradicts the ``trivial-non-trivial'' (TNT) picture proposed by \citet{palassiniyoung00} and \citet{krzakalamartin00}, which says that at low temperatures, the bond overlap is concentrated near a deterministic value, but the site overlap fluctuates. In this context, note that although our Theorem \ref{mainthm} shows that the site overlap concentrates near zero after a small perturbation, it is unclear whether that contradicts the TNT picture, since it does not say anything about the site overlap in the absence of perturbation.


\subsection*{Acknowledgements}
I thank Louis-Pierre Arguin, Pierluigi Contucci, Daniel Fisher and Jafar Jafarov for helpful comments, and Tim Sudijono for checking the proofs. This work was partially supported by NSF grants DMS-2113242 and DMS-2153654.



\section{Proofs}\label{proofsec}

\subsection{Proof of Theorem \ref{mainthm}}\label{mainthmpf}


The proof of Theorem \ref{mainthm} is based on a spectral argument, reminiscent of the analysis of dynamical percolation in \cite{garbanetal10} and the proof of disorder chaos in the SK model in \cite{chatterjee14}. Let $h_0,h_1,\ldots$ be the orthonormal basis of normalized Hermite polynomials for $L^2(\mu)$, where $\mu$ is the standard Gaussian distribution on $\R$ and $h_0\equiv 1$. Then an orthonormal basis of $L^2(J)$ is formed by products like $h_n(J) := \prod_{e\in E} h_{n_{e}} (J_e)$, where $n_e\in \N := \{0,1,\ldots\}$ for each $e$, and $n := (n_e)_{e\in E}\in \N^E$. Any square-integrable function $f(J)$ of the disorder $J$ can be expanded in this basis as
\begin{align}\label{hermite}
f(J) = \sum_{n\in \N^E} \hat{f}(n) h_n(J),
\end{align}
where 
\[
\hat{f}(n) := \ee(f(J)h_n(J)). 
\]
The infinite series on the right side in \eqref{hermite} should be interpreted as the $L^2$-limit of partial sums, where the order of summation is irrelevant.

Now take any distinct $i,j\in V^\circ$. Let $\sigma$ be a ground state in the EA model on $G$ (with or without a boundary condition). Consider $\sigma_i \sigma_j$ as a function $\phi(J)$ of the disorder $J$. This function is well-defined even if we do not impose a boundary condition. Obviously, it is in $L^2(J)$. For  any $n\in \N^E$, let $E(n)$ be the set of edges $e\in E$ such that $n_e >0$, and $V(n)$ be the set of vertices that are endpoints of the edges in $E(n)$. Then $G(n) := (V(n), E(n))$ is a subgraph of~$G$. The following lemma is the main ingredient for  the proof of Theorem \ref{mainthm}. 
\begin{lmm}\label{sizelmm}
Let all notations be as above. Then $\hat{\phi}(n) = 0$ unless both $i$ and $j$ are in $V(n)$ and the connected components of $G(n)$ containing $i$ and $j$ are either the same, or both intersect $B$. 
\end{lmm}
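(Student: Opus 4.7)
The plan is to exploit a gauge symmetry of the Gaussian disorder. For each $A\subseteq V\setminus B$, let $T_A:\R^E\to \R^E$ be the involution that reverses the sign of $J_e$ for every $e$ in the edge-boundary $\partial A$ of $A$ and fixes all other coordinates. Being a coordinate reflection on a subset of independent standard Gaussian components, $T_A$ preserves the joint law of $J$. Let $\tau^A$ denote the spin configuration obtained from $\tau$ by flipping every spin in $A$; a direct check yields the Hamiltonian identity $H_J(\tau^A) = H_{T_A(J)}(\tau)$ for every $\tau$, because on each edge of $\partial A$ both $J_e$ and $\tau^A_k\tau^A_l$ are negated, so their product is preserved, while all other summands are untouched. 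The assumption $A\cap B = \emptyset$ ensures that the flip $\sigma\mapsto \sigma^A$ respects the boundary condition, so the ground state in environment $T_A(J)$ is exactly $\sigma^A$. Consequently
\[
\phi(T_A(J)) = \ve(A)\,\phi(J), \qquad \ve(A) := (-1)^{|\{i,j\}\cap A|},
\]
with $\ve(A) = -1$ precisely when $A$ cuts the pair $\{i,j\}$.

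The parity identity $h_k(-x) = (-1)^k h_k(x)$ immediately gives $h_n(T_A(J)) = (-1)^{\sum_{e\in\partial A} n_e} h_n(J)$. Combining with the invariance of the law of $J$ under $T_A$,
\[
\hat{\phi}(n) = \ee[\phi(J) h_n(J)] = \ee[\phi(T_A(J)) h_n(T_A(J))] = \ve(A)\,(-1)^{\sum_{e\in\partial A} n_e}\,\hat{\phi}(n).
\]
Hence, whenever $\hat{\phi}(n)\neq 0$, we must have the parity constraint
\[
\sum_{e\in\partial A} n_e \,\equiv\, \1[A \text{ cuts } \{i,j\}] \pmod{2} \qquad (\star)
\]
for every $A\subseteq V\setminus B$.

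The lemma then drops out by testing $(\star)$ on two carefully chosen sets. First, with $A=\{i\}$ (allowed since $i\in V^\circ$) the right-hand side is $1$, so $\sum_{e\ni i} n_e$ is odd, forcing at least one edge at $i$ into $E(n)$; hence $i\in V(n)$, and by symmetry $j\in V(n)$. Second, let $C_i$ be the connected component of $G(n)$ containing $i$, and suppose $C_i\cap B = \emptyset$. Applying $(\star)$ with $A = C_i$: every $e\in\partial C_i$ has exactly one endpoint in $C_i$, and if such an $e$ lay in $E(n)$ its other endpoint would also belong to $C_i$, a contradiction. Thus $n_e = 0$ on $\partial C_i$, the left side of $(\star)$ is even, and the right side must be even too, forcing $j\in C_i$. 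Running the same argument starting from $j$ shows that if $C_j\cap B = \emptyset$ then $i\in C_j$. Combining: either $C_i = C_j$, or both $C_i$ and $C_j$ intersect $B$, which is precisely the stated conclusion.

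The only substantive point in this plan is the Hamiltonian identity $H_J(\tau^A) = H_{T_A(J)}(\tau)$ together with the observation that the ground state transforms as $\sigma\mapsto\sigma^A$ under $T_A$; this is exactly where the restriction $A\cap B=\emptyset$ is needed and where the $B$-intersection alternative in the statement originates. The remainder is elementary parity bookkeeping, and I do not anticipate hidden difficulties.
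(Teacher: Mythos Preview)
Your proof is correct and follows essentially the same route as the paper: both exploit the gauge symmetry $J\mapsto T_A(J)$ (flipping signs on $\partial A$ for $A\subseteq V\setminus B$), test with $A=\{i\}$ to force $i\in V(n)$, and test with $A=C_i$ to force $j\in C_i$ whenever $C_i\cap B=\emptyset$. The only cosmetic difference is that you invoke the Hermite parity identity $h_k(-x)=(-1)^k h_k(x)$ to obtain a single parity constraint $(\star)$ valid for every $A$, whereas the paper phrases the same step as a conditioning argument (using that $h_n(J)$ is $(J_e)_{e\notin\partial A}$-measurable once $\partial A\cap E(n)=\emptyset$); your formulation is slightly slicker and unifies the two cases, but the underlying mechanism is identical.
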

\begin{proof}
First, suppose that $i\in V(n)$. Let $A$ be the connected component of $G(n)$ that contains $i$. Suppose that $A\cap (B\cup \{j\}) =\emptyset$. Let $\partial A$ be the set of edges from $A$ to $A^c := V \setminus A$. Since $A$ is a connected component of $G(n)$, no edge in $\partial A$ can be a member of $E(n)$. Define a new environment $J'$ as 
\[
J'_e =
\begin{cases}
-J_e &\text{ if } e\in \partial A,\\
J_e &\text{ if } e\notin \partial A.
\end{cases}
\]
Note that $J$ and $J'$ have the same law, since the disorder distribution is symmetric around zero, and the disorder variables are independent. Since $\partial A \cap E(n)=\emptyset$, $h_n(J)$ does not depend on $(J_e)_{e\in \partial A}$. Thus, 
\begin{align}\label{phineq}
\hat{\phi}(n) &= \ee(\phi(J) h_n(J)) = \ee(\ee(\phi(J)|(J_e)_{e\notin\partial A}) h_n(J)). 
\end{align} 
But note that since $J$ and $J'$ have the same law, and $J_e' = J_e$ for $e\notin\partial A$, 
\begin{align}\label{phijeq}
\ee(\phi(J)|(J_e)_{e\notin\partial A}) &= \ee(\phi(J')|(J_e)_{e\notin\partial A}).
\end{align}
Now, let $\sigma'$ be the configuration defined as 
\[
\sigma'_k =
\begin{cases}
-\sigma_k &\text{ if } k\in  A,\\
\sigma_k &\text{ if } k\notin A.
\end{cases}
\]
Then $\sigma'$ satisfies the given boundary condition (if any) since $A\cap B = \emptyset$. Let us now split $H_{J'}(\sigma')$ as 
\begin{align*}
H_{J'}(\sigma') &= -\sum_{\substack{\{k,l\}\in E, \\  k,l\in A}}J_{kl}(- \sigma_k )(-\sigma_l)  - \sum_{\{k,l\}\in \partial A} (-J_{kl}) (-\sigma_k \sigma_l) - \sum_{\substack{\{k,l\}\in E,\\ k,l\in A^c}} J_{kl} \sigma_k \sigma_l\\
&= -\sum_{\{k,l\}\in E} J_{kl} \sigma_k \sigma_l = H_{J}(\sigma). 
\end{align*}
Moreover, for any $\tau\in \{-1,1\}^V$ satisfying the given boundary condition (if any), we have $H_{J'}(\tau) = H_J(\tau')$, where $\tau'_i = -\tau_i$ if $i\in A$ and $\tau'_i = \tau_i$ if $i\notin A$. Since $\tau'$ also satisfies the given boundary condition, this shows that $\sigma'$ minimizes $H_{J'}$, and so $\sigma'_i \sigma_j' = \phi(J')$. But since $j\notin A$ and $i\in A$, $\sigma'_i \sigma'_j = -\sigma_i\sigma_j$. Thus, $\phi(J')=-\phi(J)$, and so, by \eqref{phijeq},
\[
\ee(\phi(J)|(J_e)_{e\notin\partial A}) = 0.
\]
Plugging this into \eqref{phineq}, we get that $\hat{\phi}(n) = 0$ if $i\in V(n)$ and $A$ does not intersect $B\cup \{j\}$. 


Next, suppose that $i\notin V(n)$. In this case, taking $A := \{i\}$ and repeating the whole argument as above shows that $\hat{\phi}(n) = 0$. Thus, $\hat{\phi}(n) = 0$ unless $i\in V(n)$ and $A$ intersects $B\cup \{j\}$.

By the symmetry between $i$ and $j$, we conclude that  $\hat{\phi}(n) = 0$ unless $j\in V(n)$ and the component of $G(n)$ containing $j$ intersects $B\cup \{i\}$. Combining these two conclusions yields the claim of the lemma.
\end{proof}
Lemma \ref{sizelmm} gives the following key corollary, which says that if $i$ and $j$ are far apart and far away from the boundary, then the Hermite polynomial expansion of $\sigma_i \sigma_j$ consists of only high degree terms. 
\begin{cor}\label{phicor}
If $\hat{\phi}(n)\ne 0$, then $|E(n)|\ge \min\{d(i,j), d(i,B)+d(j,B)\}$, where $d$ is the graph distance on $G$ and $d(i,B) := \min\{d(i,k):k\in B\}$ (which is infinity if $B$ is empty).  
\end{cor}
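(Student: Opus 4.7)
The plan is to derive the corollary directly from Lemma \ref{sizelmm} by a short graph-theoretic argument: whenever $\hat{\phi}(n) \ne 0$, the subgraph $G(n)$ contains enough edges to realize one of the two distance-based connections, and the edge count of such a realization is bounded below by the corresponding graph distance in $G$.

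By Lemma \ref{sizelmm}, if $\hat{\phi}(n) \ne 0$ then $i, j \in V(n)$, and either (i) $i$ and $j$ lie in the same connected component of $G(n)$, or (ii) they lie in (possibly different) components, each of which intersects $B$. I would handle these two cases separately.

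In case (i), there is a path in $G(n)$ from $i$ to $j$. Since $G(n)$ is a subgraph of $G$, this path is also a path in $G$, hence has length at least $d(i,j)$. The edges of this path are all distinct and all lie in $E(n)$, so $|E(n)| \ge d(i,j) \ge \min\{d(i,j), d(i,B)+d(j,B)\}$. In case (ii), if $i$ and $j$ happen to be in the same component, case (i) already gives the bound, so I may assume they are in distinct components $A_i$ and $A_j$ of $G(n)$, with $A_i \cap B \ne \emptyset$ and $A_j \cap B \ne \emptyset$. Then $A_i$ contains a path in $G(n)$ from $i$ to some vertex of $B$, which as a path in $G$ has length $\ge d(i,B)$; similarly $A_j$ contains such a path of length $\ge d(j,B)$. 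Because $A_i$ and $A_j$ are disjoint components, these two paths use disjoint sets of edges, all contained in $E(n)$. Hence $|E(n)| \ge d(i,B) + d(j,B) \ge \min\{d(i,j), d(i,B)+d(j,B)\}$.

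Combining the two cases yields the stated inequality. There is no real obstacle here beyond carefully invoking the two alternatives from Lemma \ref{sizelmm} and observing that the edges counted in case (ii) are necessarily disjoint because they come from distinct components; this disjointness is what allows the two distances to add, which is the only subtle point.
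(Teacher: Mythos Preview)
Your proof is correct and follows essentially the same approach as the paper's own proof: both invoke Lemma \ref{sizelmm}, split into the two cases (same component vs.\ distinct components each meeting $B$), and bound $|E(n)|$ by the length of the relevant path(s) in $G(n)$. You are in fact slightly more explicit than the paper in noting that the two paths in case (ii) are edge-disjoint because they lie in distinct components, which is exactly what justifies adding $d(i,B)$ and $d(j,B)$.
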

\begin{proof}
Suppose that $\hat{\phi}(n)\ne 0$. Then by Lemma \ref{sizelmm}, $i,j\in V(n)$ and the connected components containing $i$ and $j$ are either the same, or they are distinct and they both touch $B$. In the first case, there is a path of edges in $E(n)$ connecting $i$ to $j$, which implies that $|E(n)|\ge d(i,j)$. In the second case, there is a path in $G(n)$ connecting $i$ to $B$ and another path in $G(n)$ connecting $j$ to $B$, which implies that $|E(n)|\ge d(i,B)+d(j,B)$. 
\end{proof}
In the following, instead of using the parameter $p$ for the perturbation, we will reparametrize $p$ as $1-e^{-t}$, where $t\in (0,\infty)$. This is helpful for the following reason. Let 
\[
J(t) := e^{-t} J + \sqrt{1-e^{-2t}} J' = (1-p) J + \sqrt{2p - p^2} J'.
\]
Then $J(t)$ is the perturbed environment for our first kind of perturbation. It is a standard fact that for any $f\in L^2(J)$, $\ee(f(J(t))|J) = P_t f(J)$, where $(P_t)_{t\ge 0}$ is the Ornstein--Uhlenbeck semigroup (see, e.g., \cite[Chapter 2 and Chapter 6]{chatterjee14}). Moreover, for each $n\in \N^E$, $h_n$ is an eigenfunction of the Ornstein--Uhlenbeck generator, with eigenvalue 
\[
|n| := \sum_{e\in E} n_e.
\]
This implies that for any $f\in L^2(J)$,
\begin{align*}
P_t f (J) = \sum_{n\in \N^E} e^{-|n|t} \hat{f}(n) h_n(J). 
\end{align*}
In particular, by the Parseval identity,
\begin{align}\label{pars1}
\ee[(\ee(f(J(t))|J))^2] = \|P_t f(J)\|_{L^2}^2 &= \sum_{n\in \N^E} e^{-2|n|t} \hat{f}(n)^2. 
\end{align}
Now consider the second kind of perturbation, where each $J_e$ is replaced by an independent copy $J_e'$ with probability $p$. Let us again reparametrize $p = 1-e^{-t}$ and set $J(t)$ to be the new environment produced by the perturbation. Recall that $h_0\equiv 1$, and for $n\in \N \setminus \{0\}$, $h_n$ integrates to zero under the standard Gaussian measure on $\R$. This implies that for any $n\in \N^E$, 
\[
\ee(h_n(J(t))|J) = (1-p)^{\delta(n)} h_n(J) = e^{-\delta(n) t} h_n(J),
\]
where 
\[
\delta(n) := |\{e\in E: n_e > 0\}|. 
\]
Therefore, in this case,
\begin{align*}
\ee(f(J(t))|J) &= \sum_{n\in \N^E} e^{-\delta(n)t} \hat{f}(n) h_n(J)
\end{align*}
and hence,
\begin{align}\label{pars2}
\ee[(\ee(f(J(t))|J))^2] &= \sum_{n\in \N^E} e^{-2\delta(n)t} \hat{f}(n)^2. 
\end{align}
Combining the above observations with Corollary \ref{phicor}, we get the following lemma. 
\begin{lmm}\label{philmm}
Let $\sigma(t)$ be a ground state for the perturbed environment $J(t)$, where the perturbation is either of the two kinds described above. Then for any distinct $i,j\in V^\circ$, 
\[
\ee[(\ee(\sigma_i(t)\sigma_j(t)|J))^2] \le e^{-2t\min\{d(i,j), d(i,B)+d(j,B)\}}.
\]
\end{lmm}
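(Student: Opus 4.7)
The plan is to combine Corollary \ref{phicor} with the two Parseval-type identities \eqref{pars1} and \eqref{pars2}, and to exploit the trivial bound $\phi(J)^2 = 1$.

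Set $\phi(J) := \sigma_i \sigma_j$ as in the paragraph preceding Lemma \ref{sizelmm}; note that $\phi$ is well-defined even without a boundary condition since $\sigma$ and $-\sigma$ give the same product. Expand $\phi$ in the Hermite basis, $\phi(J) = \sum_{n \in \N^E} \hat{\phi}(n) h_n(J)$. Let $R := \min\{d(i,j), d(i,B)+d(j,B)\}$. By Corollary \ref{phicor}, $\hat{\phi}(n) = 0$ whenever $\delta(n) = |E(n)| < R$, so only multi-indices with $\delta(n) \ge R$ contribute to the expansion.

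For the second kind of perturbation, apply identity \eqref{pars2} directly:
\[
\ee[(\ee(\phi(J(t))|J))^2] = \sum_{n \in \N^E} e^{-2\delta(n) t}\, \hat{\phi}(n)^2 \le e^{-2Rt} \sum_{n \in \N^E} \hat{\phi}(n)^2 = e^{-2Rt} \|\phi\|_{L^2}^2 = e^{-2Rt},
\]
where the last step uses Parseval and the fact that $|\phi| = 1$. For the first kind (Ornstein--Uhlenbeck) perturbation, apply identity \eqref{pars1} and note that $|n| = \sum_{e\in E} n_e \ge |E(n)| = \delta(n)$, because each nonzero coordinate contributes at least $1$. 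Hence $|n| \ge R$ whenever $\hat{\phi}(n) \ne 0$, giving
\[
\ee[(\ee(\phi(J(t))|J))^2] = \sum_{n \in \N^E} e^{-2|n| t}\, \hat{\phi}(n)^2 \le e^{-2Rt} \|\phi\|_{L^2}^2 = e^{-2Rt},
\]
which is the desired bound.

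There is essentially no obstacle: once Corollary \ref{phicor} is in hand, Lemma \ref{philmm} is a one-line spectral consequence. The only point requiring a moment of care is verifying that $\phi$ is unambiguously defined in the free boundary case (handled by the $\sigma \leftrightarrow -\sigma$ symmetry) and that the inequality $|n| \ge \delta(n)$ used in the first case is trivial, so no further machinery is needed.
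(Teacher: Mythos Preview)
Your proof is correct and follows essentially the same approach as the paper's own proof: both combine Corollary \ref{phicor} with the Parseval identities \eqref{pars1} and \eqref{pars2}, using the inequality $|n|\ge \delta(n)$ and the fact that $\|\phi\|_{L^2}^2=1$. The only cosmetic difference is that the paper first unifies the two perturbation types via the single bound $\sum_n e^{-2\delta(n)t}\hat{\phi}(n)^2$, whereas you treat them separately; the content is identical.
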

\begin{proof}
Since $\delta(n)\le |n|$, the inequalities \eqref{pars1} and \eqref{pars2} shows that for either kind of perturbation, 
\begin{align*}
\ee[(\ee(\sigma_i(t)\sigma_j(t)|J))^2] &\le \sum_{n\in \N^E} e^{-2\delta(n) t} \hat{\phi}(n)^2.
\end{align*}
By Corollary \ref{phicor}, we know that $\hat{\phi}(n)=0$ unless $\delta (n)\ge \min\{d(i,j), d(i,B)+d(j,B)\}$. Combining this with the above inequality completes the proof.
\end{proof}

We are now ready to prove Theorem \ref{mainthm}.
\begin{proof}[Proof of Theorem \ref{mainthm}]
We will work with the reparametrization $p = 1-e^{-t}$, and write $R(t)$ instead of $R(p)$. Thus, by Lemma \ref{philmm}, 
\begin{align}
\ee(R(t)^2) &= \frac{1}{|V^\circ|^2} \sum_{i, j\in V^\circ}\ee(\sigma_i \sigma_j \sigma_i(t)\sigma_j(t))\notag \\
&=  \frac{1}{|V^\circ|^2} \sum_{i, j\in V^\circ}\ee(\sigma_i \sigma_j \ee(\sigma_i(t)\sigma_j(t)|J))\notag \\
&\le  \frac{1}{|V^\circ|^2} \sum_{i, j\in V^\circ}\ee|\ee(\sigma_i(t)\sigma_j(t)|J)|\notag \\ 
&\le \frac{1}{|V^\circ|^2} \sum_{i, j\in V^\circ}\sqrt{\ee[(\ee(\sigma_i(t)\sigma_j(t)|J))^2]}\notag\\
&\le  \frac{1}{|V^\circ|} + \frac{1}{|V^\circ|^2} \sum_{\substack{i, j\in V^\circ,\\ i\ne j}} e^{-t\min\{d(i,j), d(i,B)+d(j,B)\}}. \label{neweq}
\end{align}
For each $k\in \N$, let
\[
N_k := \biggl|\biggl\{(i,j): i,j\in V^\circ, i\ne j, \frac{k}{t} \le \min\{d(i,j), d(i,B) + d(j,B)\} \le \frac{k+1}{t}\biggr\}\biggr|. 
\]
Then note that
\begin{align}\label{eteq}
 \sum_{\substack{i, j\in V^\circ,\\ i\ne j}} e^{-t\min\{d(i,j), d(i,B)+d(j,B)\}} &\le \sum_{k=0}^\infty N_ke^{-k}. 
\end{align}
By the given conditions, we have that for any $i\in V^\circ$ and $k\in \N$, the number of $j\in V^\circ\setminus\{i\}$ such that $d(i,j)\le \frac{k+1}{t}$ is at most $\alpha (k+1)^\beta t^{-\beta}$. Note that this holds even if $\frac{k+1}{t} < 1$, since in that case the number is zero. Thus, the number of pairs $(i,j)$ such that $i\ne j$ and $d(i,j)\le \frac{k+1}{t}$ is at most $\alpha |V^\circ| (k+1)^\beta t^{-\beta}$. Similarly, note that the number of $i\in V^\circ$ such that $d(i,B)\le \frac{k+1}{t}$ is at most $|B| \gamma (k+1)^\delta t^{-\delta}$. Thus, the number of pairs $(i,j)$ such that $i,j\in V^\circ$, $i\ne j$, and $d(i,B)+d(j,B)\le \frac{k+1}{t}$ is at most $|B|^2 \gamma^2 (k+1)^{2\delta} t^{-2\delta}$. Combining, we get 
\[
N_k \le \alpha |V^\circ| (k+1)^\beta t^{-\beta} + |B|^2 \gamma^2 (k+1)^{2\delta} t^{-2\delta}. 
\]
Plugging this bound into \eqref{eteq}, we get
\[
 \sum_{\substack{i, j\in V^\circ,\\ i\ne j}} e^{-t\min\{d(i,j), d(i,B)+d(j,B)\}}  \le C (|V^\circ| t^{-\beta} + |B|^2 t^{-2\delta}),
\]
where $C$ depends only on $\alpha$, $\beta$, $\gamma$ and $\delta$. Since $t = -\log(1-p)$, series expansion shows that $t\ge p$ for all $p\in (0,1)$. This completes the proof of Theorem \ref{mainthm}. 
\end{proof}

\subsection{Proof of Theorem \ref{chaosthm}}\label{chaosthmpf}
Let us reparametrize $p=1-e^{-t}$, as before. Let $\sigma' := \sigma(t)$, in the notation of Lemma \ref{philmm}. Then by Lemma \ref{philmm},
\begin{align*}
|\ee(\sigma_i\sigma_j \sigma_i'\sigma_j')| &= |\ee(\sigma_i\sigma_j \ee(\sigma_i'\sigma_j'|J))|\\
&\le \ee|\ee(\sigma_i'\sigma_j'|J)|\\
&\le \sqrt{\ee[(\ee(\sigma_i'\sigma_j'|J))^2]}\\
&\le e^{-t\min\{d(i,j), d(i,B)+d(j,B)\}}. 
\end{align*}
Since $e^{-t} = 1-p$, this completes the proof of Theorem \ref{chaosthm}. 

\subsection{Proof of Theorem \ref{boundarythm}}\label{boundarythmpf}
Take any $p\in (0,\frac{1}{2})$. Let $J(p)$ be the perturbed Hamiltonian, with the first kind of perturbation --- that is, $J(p) = (1-p)J + \sqrt{2p-p^2} J'$, where $J'$ is an independent copy of $J$. Let $\sigma(p)$ be the ground state for the perturbed environment. Let $A$ be the region where $\sigma(p)$ disagrees with $\sigma$. Then note that 
\begin{align*}
|A| &= |\{i\in V^\circ:  \sigma_i\sigma_i(p) = -1\}|\\
&= \frac{1}{2}(|\{i\in V^\circ:  \sigma_i\sigma_i(p) = -1\}| + |V^\circ| - |\{i\in V^\circ:  \sigma_i\sigma_i(p)  = 1\}|)\\
&= \frac{|V^\circ|}{2} - \frac{|V^{\circ}|R(p)}{2}. 
\end{align*}
Thus,
\begin{align}
\pp\biggl(\biggl||A|-\frac{|V^\circ|}{2}\biggr| > \frac{|V^\circ|}{4}\biggr) &= \pp\biggl(|R(p)|> \frac{1}{2}\biggr) \le 4 \ee(R(p)^2). \label{vbound}
\end{align}
Next, note that for adjacent $i,j$,
\begin{align}\label{heq1}
\sigma_i(p)\sigma_j(p) = 
\begin{cases}
- \sigma_i \sigma_j &\text{ if } \{i,j\}\in \partial A,\\
\sigma_i\sigma_j &\text{ otherwise.}
\end{cases}
\end{align}
Thus,
\begin{align*}
H_J(\sigma(p)) - H_J(\sigma) &= 2\sum_{\{i,j\}\in \partial A} J_{ij}\sigma_i \sigma_j.
\end{align*}
On the other hand, since $\sigma(p)$ minimizes $H_{J(p)}$,
\begin{align*}
H_{J(p)}(\sigma) - H_{J(p)}(\sigma(p)) \ge 0. 
\end{align*}
But note that by equation \eqref{heq1},
\begin{align*}
H_{J(p)}(\sigma) - H_{J(p)} (\sigma(p)) &= -2\sum_{\{i,j\}\in \partial A} J_{ij}(p)\sigma_i \sigma_j\\
&= -2(1-p)\sum_{\{i,j\}\in \partial A} J_{ij}\sigma_i \sigma_j - 2\sqrt{2p - p^2}\sum_{\{i,j\}\in \partial A} J_{ij}'\sigma_i \sigma_j\\
&= -(1-p) (H_J(\sigma(p)) - H_J(\sigma)) - 2\sqrt{2p - p^2}\sum_{\{i,j\}\in \partial A} J_{ij}'\sigma_i \sigma_j.
\end{align*}
Combining the last two displays, we get
\begin{align*}
H_J(\sigma(p)) - H_J(\sigma) &\le -\frac{2 \sqrt{2p-p^2}}{1-p} \sum_{\{i,j\}\in \partial A} J_{ij}'\sigma_i \sigma_j\\
&\le \frac{2 \sqrt{2p-p^2}}{1-p} |\partial A| \max_{\{i,j\}\in E} |J_{ij}'|. 
\end{align*}
But $H_J(\sigma(p)) - H_J(\sigma) = \Delta(A)$. Thus, 
\begin{align}\label{sbound}
\frac{\Delta(A)}{|\partial A|} &\le \frac{2 \sqrt{2p-p^2}}{1-p}\max_{\{i,j\}\in E} |J_{ij}'|.
\end{align}
Using a standard tail bound for Gaussian random variables, we have that for any $x\ge 0$,
\begin{align*}
\pp(\max_{\{i,j\}\in E} |J_{ij}'| \ge x) &\le \sum_{\{i,j\}\in E} \pp(|J_{ij}'|\ge x)\le 2|E| e^{-\frac{1}{2}x^2}.
\end{align*}
Combining this with \eqref{sbound}, \eqref{vbound}, and the fact that $p< \frac{1}{2}$, we get that for a sufficiently large universal constant $C$, 
\[
\pp(F \ge C\sqrt{p\log |E|}) \le |E|^{-1} + 4 \ee(R(p)^2).
\]
Invoking Theorem \ref{mainthm} to bound $\ee(R(p)^2)$ completes the proof.

\subsection{Proof of Theorem \ref{fractalthm}}\label{fractalthmpf}
Throughout this proof, $C_1, C_2,\ldots$ will denote positive constants that depend only on $d$. We will continue to use the notations from the proof of Theorem \ref{boundarythm}. First, note that by~\eqref{sbound},
\begin{align}\label{partial}
|\partial A| &\ge \frac{(1-p)\Delta(A)}{2\sqrt{2p} M},
\end{align}
where
\[
M := \max_{\{i,j\}\in E} |J_{ij}'|.
\]
Next, note that
\begin{align}
\Delta(A) &= H_J(\sigma(p)) - H_J(\sigma)\notag \\
&= H_J(\sigma(p)) - H_{J(p)}(\sigma(p)) + H_{J(p)}(\sigma(p)) - H_J(\sigma). \label{deltaeq}
\end{align}
We will separately estimate the two terms on the right. First, let 
\[
K_{ij} := \sqrt{2p-p^2}J_{ij} - (1-p)J_{ij}',
\]
so that 
\begin{align*}
J_{ij} &= (1-p) J_{ij}(p) + \sqrt{2p-p^2}K_{ij}.
\end{align*}
Then we have
\begin{align}\label{h0}
&H_J(\sigma(p)) - H_{J(p)}(\sigma(p)) = - \sum_{\{i,j\}\in E} (J_{ij}-J_{ij}(p)) \sigma_i(p)\sigma_j(p)\notag \\
&= p\sum_{\{i,j\}\in E} J_{ij}(p)\sigma_i(p)\sigma_j(p) -\sqrt{2p-p^2}\sum_{\{i,j\}\in E} K_{ij} \sigma_i(p)\sigma_j(p). 
\end{align}
Now, it is easy to show (e.g., by Sudakov minoration~\cite[Lemma A.3]{chatterjee14} or otherwise) that 
\begin{align*}
\ee\biggl(\sum_{\{i,j\}\in E} J_{ij}(p)\sigma_i(p)\sigma_j(p) \biggr) \ge C_1 L^d. 
\end{align*}
Combining this with the concentration inequality for maxima of Gaussian fields~\cite[Equation (A.7)]{chatterjee14}, we get
\begin{align}\label{p1}
\pp\biggl( \sum_{\{i,j\}\in E} J_{ij}(p)\sigma_i(p)\sigma_j(p)\ge C_2 L^d\biggr) &\ge 1 - e^{-C_3L^d}. 
\end{align}
Now note that $\cov(J_{ij}(p), K_{ij}) = 0$, which implies that $J(p)$ and $K$ are independent. Moreover, $\var(K_{ij}) = 1$. Since $\sigma(p)$ is a function of $J(p)$, this shows that 
\[
\sum_{\{i,j\}\in E} K_{ij} \sigma_i(p)\sigma_j(p) \sim \mathcal{N}(0, |E|). 
\]
In particular,
\begin{align}\label{p2}
\pp\biggl(\sum_{\{i,j\}\in E} K_{ij} \sigma_i(p)\sigma_j(p) \ge \frac{C_2\sqrt{p}L^d}{4}\biggr) &\le  e^{-C_4 pL^d}. 
\end{align}
Combining \eqref{h0}, \eqref{p1} and \eqref{p2}, we get
\begin{align}\label{h1}
\pp(H_J(\sigma(p)) - H_{J(p)}(\sigma(p))\ge C_5pL^d) \ge 1-2e^{-C_6pL^d}. 
\end{align}
Next, let 
\begin{align*}
Q_{ij} &:= \sqrt{1-p}J_{ij} + \sqrt{p}J_{ij}',\\
R_{ij} &:= \sqrt{1-p} J_{ij} + \sqrt{p}J_{ij}'',
\end{align*}
where $J''$ is another independent copy of $J$. 
Then note that 
\begin{align*}
\cov(Q_{ij}, R_{ij}) &= 1-p = \cov(J_{ij}, J_{ij}(p)). 
\end{align*}
This shows that $(Q,R)$ has the same joint distribution as $(J, J(p))$. Thus, if $\sigma^1$ and $\sigma^2$ denote the ground states for $Q$ and $R$, then for any $x \ge 0$,
\begin{align*}
\pp(H_{J(p)}(\sigma(p)) - H_J(\sigma) \le -x) &= \pp(H_{R}(\sigma^2) - H_Q(\sigma^1) \le -x).
\end{align*}
Now, conditional on $J$, the random variables $Q$ and $R$ are independent and identically distributed. The conditional means are not zero, but the coordinates are independent and the conditional variance of each coordinate is $p$. In particular, $\ee(H_R(\sigma^2)-H_Q(\sigma^1)|J)=0$. Moreover, it is not hard to show that if $J$ is fixed, then as a function of the Gaussian random  vector $(J', J'')$, $H_R(\sigma^2) - H_Q(\sigma^1)$ is Lipschitz (with respect to Euclidean distance) with a Lipschitz constant bounded above by $\sqrt{2p |E|}$.  From this, an application of the Gaussian concentration inequality~\cite[Equation (A.5)]{chatterjee14} yields that for any $x\ge 0$, 
\begin{align*}
 \pp(H_{R}(\sigma^2) - H_Q(\sigma^1) \le -x|J) &\le \exp\biggl(-\frac{x^2}{4p|E|}\biggr). 
\end{align*}
Thus,
\begin{align}\label{h2}
\pp\biggl(H_{J(p)}(\sigma(p)) - H_J(\sigma) \le - \frac{C_5 pL^d}{2}\biggr) &\le e^{-C_7 pL^d}. 
\end{align}
Combining \eqref{deltaeq}, \eqref{h1} and \eqref{h2}, we get
\begin{align*}
\pp(\Delta(A) \ge C_8 pL^d) \ge 1- 3e^{-C_9pL^d}. 
\end{align*}
Finally, note that for any $K>0$,
\begin{align*}
\pp(M > K\sqrt{\log L}) &\le \sum_{\{i,j\}\in E}\pp(|J_{ij}'|> K\sqrt{\log L}) \\
&\le C_{10}L^de^{-\frac{1}{2}K^2\log L}= C_{10} L^{d - \frac{1}{2}K^2}. 
\end{align*}
Combining the last two displays with \eqref{partial} completes the proof. 

\subsection{Proof of Theorem \ref{oldthm}}\label{oldthmpf}
Take any edge $e = \{i,j\}\in \partial A$, where $A$ is now a given set as in the statement of Theorem \ref{oldthm}, and not random as in the previous subsection. Let $H_1$ (resp., $H_2$) be the minimum energy of the system subject to the constraints $\sigma_i = \sigma_j$ (resp., $\sigma_i = -\sigma_j$) and $J_e=0$, keeping all other edge weights intact. Then the ground state energy is $\min\{-J_e + H_1, J_e + H_2\}$. Moreover, the ground state satisfies $\sigma_i = \sigma_j$ if $-J_e +H_1 < J_e +H_2$ and $\sigma_i = -\sigma_j$ if $-J_e + H_1 > J_e + H_2$. (Note that these are the only possibilities, since equality occurs with probability zero.) The conditions can be rewritten as $J_e > \frac{1}{2}(H_1-H_2)$ and $J_e < \frac{1}{2}(H_1-H_2)$. Thus, if we change the value of $J_e$, the ground state does not change as long as the new value is on the same side of $\frac{1}{2}(H_1-H_2)$ as the old one. 

Let us say that two edges are ``neighbors'' of each other if they share one common endpoint. It is not hard to see that $|H_1-H_2|$ is at most the sum of $|J_f|$ over all edges $f$ that are neighbors of $e$. Let $S_e$ denote this sum. We will say that $e$ is a ``special edge'' if $J_e > S_e + 2$. Note that if $e=\{i,j\}$ is special, then $J_e >0$ and $\sigma_i =\sigma_j$ for the ground state $\sigma$.


It is easy to see that one can choose a subset $K\subseteq \partial A$ such that no two edges in $K$ are neighbors of each other or have a common neighbor, and $|K|\ge c|\partial A|$, where $c>0$ depends only on the maximum degree of $G$. 

We make two important observations about $K$. First, note that the events $\{J_e > S_e +2\}$, as $e$ ranges over $K$, are independent. Thus, if $X$ denotes the number of special edges in $K$, then $X$ is a sum of independent Bernoulli random variables. Moreover, it is not hard to see that $\ee(X) \ge a |K|$ for some constant $a>0$ depending only on the maximum degree of $G$.

Next, we claim that $\Delta(A) \ge 2X$. To see this, note that if we replace $J_e$ by $J_e -1$ for any special edge $e = \{i,j\}\in K$, then the ground state does not change. But all other special edges in $K$ remain special even after this operation, since no two edges in $K$ are neighbors of each other. Thus, we can repeat this substitution successively for each special edge in $K$, keeping the ground state unchanged. 

Let $\sigma$ denote the ground state in the environment $J$. Let $J'$ denote the new environment obtained above, and $\sigma'$ denote the state obtained by overturning all the spins in $A$. Then by the conclusion of the previous paragraph, $H_{J'}(\sigma)\le H_{J'}(\sigma')$. But note that $\sigma_i\sigma_j=1$ for every special edge $\{i,j\}$. Thus,
\begin{align*}
H_{J'}(\sigma') - H_{J'}(\sigma) &= 2\sum_{\{i,j\}\in \partial A} J'_{ij} \sigma_i \sigma_j\\
&= 2\sum_{\{i,j\}\in \partial A} J_{ij} \sigma_i \sigma_j - 2X\\
&= \Delta(A) - 2X.
\end{align*}
This proves that $\Delta(A)\ge 2X$. By the observations about $X$ made above, it is now easy to complete the proof (e.g., by Hoeffding's concentration inequality).

\subsection{Proof of Theorem \ref{dropletthm}}\label{dropletthmpf}
Consider the system perturbed by the second kind of perturbation, with parameter $p$. Let $X$ be the number of edges where $J_e$ is replaced by an independent copy $J_e'$. Then $X$ is a Binomial$(|E|, p)$ random variable. A different way to cause the same perturbation is to first generate $X$ from the Binomial$(|E|, p)$ distribution, and then pick $X$ distinct edges at random and replace the couplings by independent copies. Let $e_1,e_2,\ldots,e_X$ denote these edges. Let $\sigma^0 =\sigma$ be the original ground state, and $\sigma^k$ be the ground state after replacing $J_{e_1},\ldots,J_{e_k}$ by independent copies. Let $\sigma' = \sigma^X$ be the ground state after completing the whole replacement process. 

Let $R(p)$ denote the site overlap between $\sigma$ and $\sigma'$. Note that 
\begin{align*}
R(p)^2 &= \frac{1}{|V^\circ|^2} \sum_{i,j\in V^\circ} \sigma_i \sigma_j \sigma_i'\sigma_j'=  \frac{2}{|V^\circ|^2} \sum_{i,j\in V^\circ} \biggl(\frac{1}{2}- 1_{\{\sigma_i \sigma_j \ne \sigma_i'\sigma_j'\}}\biggr),
\end{align*}
which implies that
\begin{align}\label{rpeq}
\ee(R(p)^2) = \frac{2}{|V^\circ|^2} \sum_{i,j\in V^\circ} \biggl(\frac{1}{2}- \pp(\sigma_i \sigma_j \ne \sigma_i'\sigma_j')\biggr)
\end{align}
Now note that 
\begin{align}\label{psigma}
\pp(\sigma_i \sigma_j \ne \sigma_i'\sigma_j'|X) &\le \sum_{k=1}^X \pp(\sigma_i^{k-1}\sigma_j^{k-1} \ne \sigma_i^k \sigma_j^k|X). 
\end{align}
Let $\tilde{\sigma}$ be the ground state after replacing the weight on one uniformly chosen edge by an independent copy in the original system. Given $X$, $\sigma^{k-1}$ has the same law as $\sigma$ for any $1\le k\le X$. Given $X$ and $\sigma^{k-1}$, $e_k$ is uniformly distributed on $E$. Thus, given $X$, $(\sigma^{k-1}, \sigma^k)$ has the same distribution as $(\sigma, \tilde{\sigma})$. This shows that for any $1\le k\le X$, 
\[
\pp(\sigma_i^{k-1}\sigma_j^{k-1} \ne \sigma_i^k \sigma_j^k|X) = \pp(\sigma_i \sigma_j \ne \tilde{\sigma}_i \tilde{\sigma}_j). 
\]
Plugging this into \eqref{psigma}, we get
\begin{align*}
\pp(\sigma_i \sigma_j \ne \sigma_i'\sigma_j'|X) &\le X \pp(\sigma_i \sigma_j \ne \tilde{\sigma}_i \tilde{\sigma}_j). 
\end{align*}
Taking expectation on both sides gives
\begin{align*}
\pp(\sigma_i \sigma_j \ne \sigma_i'\sigma_j') &\le |E| p \pp(\sigma_i \sigma_j \ne \tilde{\sigma}_i \tilde{\sigma}_j).
\end{align*}
Combining this with \eqref{rpeq}, we get
\begin{align*}
\sum_{i,j\in V^\circ}  \pp(\sigma_i \sigma_j \ne \tilde{\sigma}_i \tilde{\sigma}_j) &\ge \frac{1}{|E|p} \sum_{i,j\in V^\circ} \pp(\sigma_i \sigma_j \ne \sigma_i'\sigma_j')\\
&=\frac{|V^\circ|^2}{2|E|p} (1-\ee(R(p)^2)).
\end{align*}
Applying Theorem \ref{mainthm} to the right side gives
\begin{align*}
\sum_{i,j\in V^\circ}  \pp(\sigma_i \sigma_j \ne \tilde{\sigma}_i \tilde{\sigma}_j) &\ge \frac{|V^\circ|^2}{2|E|p} \biggl(1- \frac{1}{|V^\circ|} - \frac{C (|V^\circ| p^{-\beta} + |B|^2 p^{-2\delta})}{|V^\circ|^2}\biggr). 
\end{align*}
Choosing 
\[
p = c \max\biggl\{|V^\circ|^{-\frac{1}{\beta}}, \biggl(\frac{|B|}{|V^\circ|}\biggr)^{\frac{1}{\delta}}\biggr\}
\]
 for some sufficiently large  $c$  (depending only on $\alpha$, $\beta$, $\gamma$ and $\delta$), and assuming that $|V^\circ|$ and $\frac{|V^\circ|}{|B|}$ are sufficiently large (again, depending only on $\alpha$, $\beta$, $\gamma$ and $\delta$), we can ensure that the term within the brackets on the right side above is at least $\frac{1}{2}$. (To see that $|V^\circ|$ and $\frac{|V^\circ|}{|B|}$ can be assumed to be sufficiently large, notice the following. For any $e$, $|D(e)|\ge 1$, since at least one spin is flipped. Thus, $\sum_{e\in E}\E|D(e)|\ge |E|$. Since $G$ is a connected graph, $|E|\ge |V|\ge |V|^\circ$. Thus, the claimed inequality holds automatically if at least one of $|V^\circ|$ and $\frac{|V^\circ|}{|B|}$ is smaller than some given constant.) Thus, if  $|V^\circ|$ and $\frac{|V^\circ|}{|B|}$ are large enough, then 
\begin{align*}
\sum_{i,j\in V^\circ}  \pp(\sigma_i \sigma_j \ne \tilde{\sigma}_i \tilde{\sigma}_j) &\ge \frac{C|V^\circ|^2}{|E| \max\bigl\{|V^\circ|^{-\frac{1}{\beta}}, \bigl(\frac{|B|}{|V^\circ|}\bigr)^{\frac{1}{\delta}}\bigr\}},
\end{align*}
for some $C>0$ that depends only on $\alpha$, $\beta$, $\gamma$ and $\delta$. But the number of pairs $(i,j)$ such that $\sigma_i \sigma_j \ne \tilde{\sigma}_i \tilde{\sigma}_j$ is equal to $(|V^\circ|-|A|)|A|$, where $A$ is the set of sites where $\sigma$ disagrees with $\tilde{\sigma}$ (taking the smaller of two sets if $B=\emptyset$). Thus,
\[
\ee|A| \ge \frac{1}{|V^\circ|} \ee[(|V^\circ|-|A|)|A|] = \frac{1}{|V^\circ|}\sum_{i,j\in V^\circ}  \pp(\sigma_i \sigma_j \ne \tilde{\sigma}_i \tilde{\sigma}_j).
\]
Combining with the previous display completes the proof of Theorem \ref{dropletthm}, because $A = D(e)$ for an edge $e$ chosen uniformly at random.


\subsection{Proof of Theorem \ref{decaythm}}\label{decaythmpf}
Consider the EA model on the cube $V'$ of side-length $100L$ centered at the origin, with periodic boundary condition. Let $U$ be the cube of side-length $2L$ centered at $i$, so that $U\subseteq (V')^\circ$. Let $k := 4L e_1$, where $e_1 = (1,0,\ldots,0)$. Note that $k\in (V')^\circ\setminus U$. Then, it is easy to see that $\partial U$ can be written as the union of two disjoint sets $S_1$ and $S_2$, satisfying the conditions that 
\begin{itemize}
\item there is a path $P_1$ from $0$ to $k$ consisting of edges whose endpoints are all at a graph distance at least $2L$ from $S_1$, 
\item there is a path $P_2$ from $0$ to $k$ consisting of edges whose endpoints are all at a graph distance at least $2L$ from~$S_2$, and
\item the lengths of $P_1$ and $P_2$ are at most $16L$.
\end{itemize}
For example, one can take $S_1$ and $S_2$ to be the top and bottom halves of $\partial U$. See Figure~\ref{illus} for an illustration. 

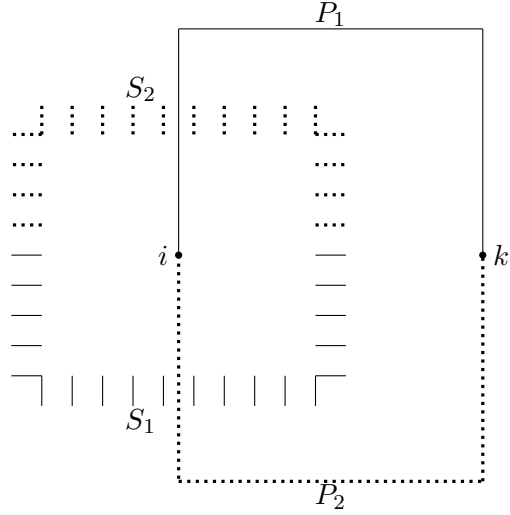
\begin{figure}[t]
\begin{center}
\begin{tikzpicture}[scale = 1]
\draw [fill] (0,0) circle [radius = 0.04] node [black,left] {$0$};
\draw [fill] (4,0) circle [radius = 0.04] node [black,right] {$k$};
\draw (1.8, 0) to (2.2,0);
\draw (1.8, -.4) to (2.2,-.4);
\draw (1.8, -.8) to (2.2,-.8);
\draw (1.8, -1.2) to (2.2,-1.2);
\draw (1.8, -1.6) to (2.2,-1.6);
\draw (-1.8, 0) to (-2.2,0);
\draw (-1.8, -.4) to (-2.2,-.4);
\draw (-1.8, -.8) to (-2.2,-.8);
\draw (-1.8, -1.2) to (-2.2,-1.2);
\draw (-1.8, -1.6) to (-2.2,-1.6);
\draw (-1.8, -1.6) to (-1.8,-2);
\draw (-1.4, -1.6) to (-1.4,-2);
\draw (-1, -1.6) to (-1,-2);
\draw (-.6, -1.6) to (-.6,-2);
\draw (-.2, -1.6) to (-.2,-2);
\draw (.2, -1.6) to (.2,-2);
\draw (.6, -1.6) to (.6,-2);
\draw (1, -1.6) to (1,-2);
\draw (1.4, -1.6) to (1.4,-2);
\draw (1.8, -1.6) to (1.8,-2);
\draw (0,0) to (0, 3);
\draw (0,3) to (4, 3);
\draw (4,3) to (4,0);
\draw (2,3.2) node {$P_1$};
\draw (-.5,-2.2) node {$S_1$};

\draw [very thick, dotted] (1.8, .4) to (2.2,.4);
\draw [very thick, dotted]  (1.8, .8) to (2.2,.8);
\draw [very thick, dotted]  (1.8, 1.2) to (2.2,1.2);
\draw [very thick, dotted]  (1.8, 1.6) to (2.2,1.6);
\draw [very thick, dotted]  (-1.8, .4) to (-2.2,.4);
\draw [very thick, dotted]  (-1.8, .8) to (-2.2,.8);
\draw [very thick, dotted]  (-1.8, 1.2) to (-2.2,1.2);
\draw [very thick, dotted]  (-1.8, 1.6) to (-2.2,1.6);
\draw [very thick, dotted]  (-1.8, 1.6) to (-1.8,2);
\draw [very thick, dotted]  (-1.4, 1.6) to (-1.4,2);
\draw [very thick, dotted]  (-1, 1.6) to (-1,2);
\draw [very thick, dotted]  (-.6, 1.6) to (-.6,2);
\draw [very thick, dotted]  (-.2, 1.6) to (-.2,2);
\draw [very thick, dotted]  (.2, 1.6) to (.2,2);
\draw [very thick, dotted]  (.6, 1.6) to (.6,2);
\draw [very thick, dotted]  (1, 1.6) to (1,2);
\draw [very thick, dotted]  (1.4, 1.6) to (1.4,2);
\draw [very thick, dotted]  (1.8, 1.6) to (1.8,2);

\draw [very thick, dotted] (0,0) to (0, -3);
\draw [very thick, dotted] (0,-3) to (4, -3);
\draw [very thick, dotted] (4,-3) to (4,0);
\draw (2,-3.2) node {$P_2$};
\draw (-.5,2.2) node {$S_2$};

\end{tikzpicture}
\caption{Schematic illustration of the sets $S_1, S_2$ and the paths $P_1, P_2$.\label{illus}}
\end{center}
\end{figure}

Let $J'$ be the environment obtained from $J$ by replacing $J_e$ by $-J_e$ for each $e\in S_1$, and $J''$ be the environment obtained from $J$ by replacing $J_e$  by $-J_e$ for each $e\in \partial U$. Let $\sigma$, $\sigma'$ and $\sigma''$ be the ground states in the environments $J$, $J'$ and $J''$. 

Let $e = \{u,v\}$ be an edge in the path $P_1$. By the properties of $P_1$ listed above, the  cube $W$ of side-length $L$ centered at $u$ does not intersect $S_1$. Let $Q$ be the event that $\tau_u\tau_v$ is {\it not} the same for all boundary conditions on $W$, where $\tau$ denotes the ground state for the EA model on $W$. Then the event $\{\sigma_u \sigma_v \ne \sigma'_u \sigma_v'\}$ implies $Q$, and hence
\[
\pp(Q) \ge \pp(\sigma_u \sigma_v \ne \sigma'_u \sigma_v'). 
\]
Now, we can define an event $Q$ as above for every edge in $P_1$, and they all have the same probability due to the periodic boundary condition. Thus,
\begin{align*}
|P_1| \pp(Q) &\ge \sum_{\{u,v\}\in P_1} \pp(\sigma_u \sigma_v \ne \sigma'_u \sigma_v')\\
&\ge \pp(\sigma_0\sigma_k\ne \sigma_0'\sigma_k'),
\end{align*}
where the last inequality follows from the observation that if $\sigma_0\sigma_k \ne \sigma_0'\sigma_k'$, then we must have that $\sigma_u\sigma_v \ne \sigma_u'\sigma_v'$ for some edge $\{u,v\}\in P_1$. But by assumption, $|P_1|\le 16L$. Thus,
\[
\pp(Q)\ge \frac{\pp(\sigma_0\sigma_k\ne \sigma_0'\sigma_k')}{16L}. 
\]
Similarly, working with $P_2$ and $S_2$ instead of $P_1$ and $S_1$, we get that 
\[
\pp(Q)\ge \frac{\pp(\sigma_0'\sigma_k'\ne \sigma_0''\sigma_k'')}{16L}. 
\]
Combining, we get
\begin{align*}
\pp(Q)&\ge \frac{\pp(\sigma_0\sigma_k\ne \sigma_0'\sigma_k')+ \pp(\sigma_0'\sigma_k'\ne \sigma_0''\sigma_k'')}{32L}\\
&\ge  \frac{\pp(\sigma_0\sigma_k\ne \sigma_0''\sigma_k'')}{32L}.
\end{align*}
But, the environment $J''$ satisfies $J_e'' = -J_e$ for all $e\in \partial U$ and $J_e'' = J_e$ for all $e\notin \partial U$. Moreover, $0\in U$ and $k\notin U$. Thus, with probability one, $\sigma_0'' \sigma_k'' = -\sigma_0\sigma_k$. Plugging this information into the above inequality, we get
\[
\pp(Q) \ge \frac{1}{32L}. 
\]
Since the event $Q$ corresponding to the edge $\{0,j\}$ is the event $\me$ from the theorem statement, this completes the proof.

\bibliographystyle{abbrvnat}

\bibliography{myrefs}

\end{document}